\documentclass{article}

\usepackage{stylefile}

\usepackage[utf8]{inputenc} 
\usepackage[T1]{fontenc}    
\usepackage{hyperref}       
\usepackage{url}            
\usepackage{booktabs}       
\usepackage{amsfonts}       
\usepackage{nicefrac}       
\usepackage{microtype}      
\usepackage{graphicx}
\usepackage{hyphenat}
\graphicspath{ {./images/} }

\usepackage{amsthm}
\newtheorem{theorem}{Theorem}[section]

\newtheorem{prop}[theorem]{Proposition}

\newtheorem*{remark}{Remark}

\usepackage[linesnumbered,ruled,vlined]{algorithm2e}
\SetKwInput{KwInput}{Input}                
\SetKwInput{KwOutput}{Output}  
\usepackage{mathtools}
\usepackage{bm}
\DeclarePairedDelimiter\floor{\lfloor}{\rfloor}

\title{Some variations of the secretary problem}

\author{
 Sarthak Agrawal \\
 Computer Science and Engineering\\
  Indian Institute Of Technology Kanpur\\
   \texttt{asarthak@iitk.ac.in} \\
    \And
 Sanjeev Saxena \\
 Computer Science and Engineering\\
  Indian Institute Of Technology Kanpur\\
  \texttt{ssax@iitk.ac.in}\\
}

\begin{document}
\maketitle
\begin{abstract}
We consider two variations of the classical secretary problem. 
\begin{itemize}
    \item A variation of the returning secretary problem where each interviewee may appear a second time with a fixed probability $p$. The decision-maker observes interviewees sequentially and must choose whether to accept or reject each appearance. We characterize the optimal threshold rule and examine its dependence on the reappearance probability $p$, highlighting how additional information from repeated appearances improves selection performance.
    \item A variation of the secretary problem in which success is defined as selecting any one of the top three interviewees rather than the single best. Interviewees are observed sequentially in random order, and decisions are irreversible. We estimated the success probability under this relaxed success criterion using the threshold strategy of the classical secretary problem. The results show that allowing selection among the top three significantly increases the success probability and shifts the optimal stopping threshold earlier than in the classical problem. This model provides insight into realistic decision-making scenarios where top interviewees are more or less similar. 
\end{itemize}
\end{abstract}

\section{Introduction}

The secretary problem is a problem in optimal stopping
theory\cite{Hill2009,SecProbWiki} and also a classic example of the
random-arrival model\cite{vardi:LIPIcs:2015:4953}. In this problem,
there are $n$ interviewees for the secretary position. These
interviewees arrive in random order and are interviewed sequentially.
We know (or can remember) the best interviewee interviewed so far.
However, the employer is unaware of the interviewees he has not yet
seen. Moreover, immediately after the interview, the employer either
has to select or reject the interviewee. An interviewee once rejected
cannot be reconsidered. We are to determine a strategy such that the
best interviewee is selected with maximum
probability\cite{SecProbWiki}. 

This problem has been studied extensively in the fields of applied
probability, statistics, and decision theory. 
This problem has been extensively studied and the best shortest proof
so far is 
by the odds algorithm\cite{bruss2000}. We basically reject
approximately the first $n/e$ (cutoff value) interviewees and select
the first one who is better than those already seen\cite{dynkin}. 

This classical problem has many
variants, such as:
\begin{itemize}
    \item K-choice secretary problem\cite{Babaioff2008}.
    \item The knapsack secretary problem\cite{Babaioff2007}.
    \item The matroid secretary problem\cite{babaioff2007matroids}.
    \item Returning secretary problem\cite{Garrod2012jan,Vardi2014,Ribas2018}.
    \item The best or worst problem\cite{Bayn2017}
\end{itemize}

\subsection{Problem Statement}

Let $N$ distinct interviewees be totally ordered by a strict ranking,
with ranks $\{1,2,\dots,N\}$, where rank $1$ is the best. Interviewees
arrive randomly over time according to a uniform distribution.

Each interviewee arrives at least once. At each appearance, the
decision-maker observes only the relative rank of the appearing
interviewee among all interviewees observed so far (across all
appearances), but not the interviewee's absolute rank nor whether the
interviewee will reappear again in the future.

After observing an appearance, the decision-maker must immediately
either \textit{accept} the interviewee (terminating the process) or
\textit{reject} the appearance. Rejection of an interviewee at his/her
last appearance permanently eliminates that interviewee from future
consideration. The goal is to maximize the probability of selecting
the rank one (the best) interviewee.

A stopping rule is a (possibly randomized) policy generated by the
observed relative ranks. The problem is to characterize an optimal
stopping rule and the corresponding maximum success probability.

\subsection{Assumptions and terminologies}
Let us define some terminology, 
\begin{itemize}
\item An interviewee who is better than all the interviewees who have
been interviewed so far is called \textit{leading}.

\item Interviewee selected following the proposed strategy is called
\textit{chosen}.
\end{itemize}

\section{Technical results} \label{chap:tech-results} 

Here, we summarize some technical results that we have used for
analysis throughout the paper. These results are mainly from Ribas
\cite{Ribas2018}.
\begin{quote}
    
\begin{theorem}
\label{thm:1}
    Ribas[Proposition-1]\cite{Ribas2018}:
    Let $\{ F_n\}_{n\in \mathbb{N}}$ be a sequence of real functions
    with $F_n$ defined by integer values in $\{0,...,n\}$ and let $M(n)$ be the value for which the function $F_n$ reaches its maximum. Assume that the sequence of functions $\{ f_n\}_{n\in\mathbb{N}}$ defined by $f_n(x) := F_n(\floor{nx})$, converges uniformly on $[0,1]$ to $f$ continuous in $[0,1]$ and that $\theta$ is the only global maximum of $f$ in $[0,1]$. Then,
    \begin{enumerate}
        \item $\underset{n}{\lim} \frac{M(n)}{n} = \theta$
        \item $\underset{n}{\lim} F_n(M(n)) = f(\theta)$
        \item If $\mathfrak{M} \sim M(n)$ then $\underset{n}{\lim} F_n(\mathfrak{M}(n)) = f(\theta)$ 
    \end{enumerate}
\end{theorem}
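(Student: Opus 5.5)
The plan is to argue directly from uniform convergence and the uniqueness of the global maximum, proving the three items in the order (2), (1), (3). Write $\|f_n-f\|_\infty=:\varepsilon_n\to 0$. Every integer $k\in\{0,\dots,n\}$ satisfies $k=\floor{nx}$ for $x=k/n\in[0,1]$. Hence $F_n(k)=f_n(k/n)$, and so $\max_k F_n(k)=\sup_{x\in[0,1]} f_n(x)$, since $f_n$ only takes the values $F_n(k)$.

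\textbf{Item 2.} Since $|\sup f_n-\sup f|\le \varepsilon_n$ and $\sup f=f(\theta)$ (attained by continuity on a compact set), we get $F_n(M(n))=\sup f_n\to f(\theta)$.

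\textbf{Item 1.} Put $x_n=M(n)/n$. Then
\[
f(x_n)\ge f_n(x_n)-\varepsilon_n=F_n(M(n))-\varepsilon_n\to f(\theta).
\]
Suppose $x_n\not\to\theta$. By compactness, some subsequence converges to $y\neq\theta$. Continuity then gives $f(y)\ge f(\theta)$, which contradicts uniqueness of the global maximum. Hence $M(n)/n\to\theta$. This is the step I expect to need the most care. Uniqueness of the maximizer is used only through this subsequence argument, and it is what converts convergence of values into convergence of locations.

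\textbf{Item 3.} I interpret $\mathfrak M\sim M(n)$ as $\mathfrak M(n)/M(n)\to 1$, with $\mathfrak M(n)$ an integer in $\{0,\dots,n\}$. Assume first $\theta>0$. Then item 1 gives $\mathfrak M(n)/n\to\theta$. If $\theta=0$, the asymptotic equivalence still gives $\mathfrak M(n)/n\to 0$, since $\mathfrak M(n)/n=(\mathfrak M(n)/M(n))(M(n)/n)$. In either case, put $z_n=\mathfrak M(n)/n$. Then
\[
|F_n(\mathfrak M(n))-f(\theta)|\le |f_n(z_n)-f(z_n)|+|f(z_n)-f(\theta)|\le \varepsilon_n+|f(z_n)-f(\theta)|\to 0,
\]
by uniform convergence and continuity of $f$ at $\theta$. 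This proves item 3.
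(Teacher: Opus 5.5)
Your proof is correct and complete. The paper does not prove this result itself: it quotes it from Ribas (Proposition 1) and uses it as a black box, so there is no in-paper argument to compare against. Your route is the standard one:
\begin{itemize}
\item identify $\max_k F_n(k)$ with $\sup_{[0,1]} f_n$ (using $x=k/n$);
\item use $|\sup f_n-\sup f|\le\|f_n-f\|_\infty$ for item 2;
\item run a compactness and subsequence argument for item 1, where uniqueness of $\theta$ turns convergence of maximal values into convergence of maximizers.
\end{itemize}
The argument also works for any choice of $M(n)$ when $F_n$ has several maximizers.

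Two small remarks. First, the case split in item 3 is unnecessary. The identity $\mathfrak M(n)/n=(\mathfrak M(n)/M(n))\,(M(n)/n)$ gives $\mathfrak M(n)/n\to\theta$ for every $\theta$, and you tacitly need it in the case $\theta>0$ as well; item 1 alone does not give it. Second, your item 3 argument really only uses $\mathfrak M(n)\in\{0,\dots,n\}$ and $\mathfrak M(n)/n\to\theta$. That hypothesis is equivalent to $\mathfrak M\sim M$ when $\theta>0$ and strictly weaker when $\theta=0$, so you have in fact proved a slightly more general statement.

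If you prefer to avoid subsequences, item 1 can be phrased with an explicit gap. For each $\delta>0$, continuity, compactness and uniqueness give $\sup_{|x-\theta|\ge\delta}f<f(\theta)$. Uniform convergence then forces every maximizer of $f_n$ into $(\theta-\delta,\theta+\delta)$ for large $n$. This is the same idea in a different form.
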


\begin{theorem}
    \label{thm:2}
    Ribas[Theorem-1]\cite{Ribas2018}:
    Consider the sequences of functions $\{F_n\}_{n\in\mathbb{N}}$, $\{G_n\}_{n\in\mathbb{N}}$ and $\{H_n\}_{n\in\mathbb{N}}$ with $F_n, G_n , H_n : [0,n] \cap \mathbb{Z}\rightarrow \mathbb{R}$ and defined recursively by the conditions:
    \begin{equation*}
        F_n(k) = G_n(k) + H_n(k)F_n(k-1) \ \ and \ \ F_n(0) = \mu
    \end{equation*}
    Also for all $x$, $0 \leq x \leq 1$, Let $f_n(x) := F_n(\floor{nx})$, $h_n(x) := n(1-H_n(\floor{nx}))$ and $g_n(x) := nG_n(\floor{nx})$. If the following conditions hold:
    \begin{itemize}
        \item Both $h_n(x)$ and $g_n(x)$ converges on $(0,1)$ and uniformly on a $[\epsilon,\epsilon^{\prime}]$ for all $0<\epsilon<\epsilon^{\prime}<1$ to continuous functions in $(0,1)$, $h(x)$ and $g(x)$, respectively.
        \item $f_n(x)$ converges uniformly in $[0,1]$ to a continuous function $f(x)$.
    \end{itemize}
    Then, $f(0) = \mu$ and $f$ satisfy in $(0,1)$,
    \begin{equation*}
        f^\prime(x) = -f(x)h(x) + g(x)
    \end{equation*}
\end{theorem}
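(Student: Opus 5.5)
The plan is to read the recursion as a forward Euler scheme with step $1/n$ for the linear ODE $f' = -hf + g$, and then pass to the limit in its integrated form. Set $x_k = k/n$. The recursion gives
\begin{equation*}
F_n(k) - F_n(k-1) = G_n(k) - \bigl(1-H_n(k)\bigr)F_n(k-1).
\end{equation*}
Since $g_n(x_k) = nG_n(k)$ and $h_n(x_k) = n(1-H_n(k))$, this becomes
\begin{equation*}
F_n(k) - F_n(k-1) = \tfrac{1}{n}\bigl(g_n(x_k) - h_n(x_k)\,F_n(k-1)\bigr).
\end{equation*}
Fix $0<a<b<1$ and sum this identity for $k$ from $\floor{na}+1$ to $\floor{nb}$. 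The left side telescopes to $f_n(b) - f_n(a)$. The right side is
\begin{equation*}
\frac1n \sum_k \bigl(g_n(x_k) - h_n(x_k)\, f_n(x_{k-1})\bigr).
\end{equation*}
Because $f_n$, $g_n$ and $h_n$ are step functions constant on intervals $[k/n,(k+1)/n)$, this sum equals $\int_a^b \bigl(g_n(t) - h_n(t) f_n(t - 1/n)\bigr)\,dt$ up to boundary terms of size $O(1/n)$. The boundary terms are controlled because $g_n$, $h_n$ are uniformly bounded on $[a,b]$ (they converge uniformly there to continuous functions) and $f_n$ is uniformly bounded (it converges uniformly on $[0,1]$).

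Next I take $n\to\infty$. Uniform convergence of $g_n\to g$ and $h_n\to h$ on $[a,b]\subset[\epsilon,\epsilon']$ is one ingredient. The other is $f_n(t-1/n)\to f(t)$ uniformly. This follows from uniform convergence of $f_n$ to $f$ together with uniform continuity of $f$ on $[0,1]$, since $|f_n(t-1/n) - f(t)| \le \|f_n - f\|_\infty + |f(t-1/n) - f(t)|$. Together these give
\begin{equation*}
f(b) - f(a) = \int_a^b \bigl(g(t) - h(t)f(t)\bigr)\,dt
\end{equation*}
for all $0<a<b<1$. The integrand is continuous on $(0,1)$, so the fundamental theorem of calculus shows $f$ is differentiable on $(0,1)$ with $f'(x) = -f(x)h(x) + g(x)$.

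For the boundary value, note that $f_n(0) = F_n(\floor{0}) = F_n(0) = \mu$ for every $n$. Uniform convergence, indeed pointwise convergence at $0$, then gives $f(0)=\mu$.

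The main obstacle is the passage from the Riemann-type sum to the integral. The convergence of $g_n$ and $h_n$ is only assumed on compact subintervals of $(0,1)$, so $h$ and $g$ may blow up at the endpoints. I avoid this by working only on $[a,b]$ strictly inside $(0,1)$ and never integrating up to $0$; the boundary value comes separately from $f_n(0)=\mu$. The remaining care is in the index shift: the scheme evaluates $f_n$ at $x_{k-1}$, while $g_n$ and $h_n$ are evaluated at $x_k$. This mismatch is absorbed by the uniform continuity estimate above.
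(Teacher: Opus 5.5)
The paper gives no proof of this statement. It quotes it from Ribas~\cite{Ribas2018} (Theorem 1) and uses it as a black box, so there is no in-paper argument to compare yours with.

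Your argument is correct and self-contained. The individual steps all check out:
\begin{itemize}
\item The identity $F_n(k)-F_n(k-1)=\frac1n\bigl(g_n(k/n)-h_n(k/n)F_n(k-1)\bigr)$ is exact, since $\lfloor n\cdot k/n\rfloor=k$.
\item The indices $\lfloor na\rfloor<k\le\lfloor nb\rfloor$ lie in $[1,n]$ because $0<a<b<1$, so the recursion applies to every term.
\item On $[k/n,(k+1)/n)$ the functions $g_n$, $h_n$ and $t\mapsto f_n(t-1/n)$ equal the constants $nG_n(k)$, $n(1-H_n(k))$ and $F_n(k-1)$. The sum is therefore literally the integral of a step function.
\item The passage to the limit, the integral identity on every $[a,b]\subset(0,1)$, the appeal to the fundamental theorem of calculus, and $f(0)=\lim_n f_n(0)=\mu$ are all sound.
\end{itemize}

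One small precision is needed. The telescoped sum equals the integral over $\bigl[(\lfloor na\rfloor+1)/n,\,(\lfloor nb\rfloor+1)/n\bigr)$, and the right end of this interval exceeds $b$ by up to $1/n$. So the uniform bounds on $g_n,h_n$ that control the boundary terms should be taken on a slightly larger compact interval $[a,b']$ with $b<b'<1$, valid once $b+1/n<b'$. The hypothesis provides exactly this.

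Compared with simply citing Ribas, your route shows where each hypothesis enters:
\begin{itemize}
\item Convergence of $g_n,h_n$ only on compact subintervals of $(0,1)$ suffices, because you never integrate up to $0$; the initial condition comes separately from $f_n(0)=\mu$.
\item The assumed uniform convergence of $f_n$ to a continuous $f$ on $[0,1]$ is genuinely used, in handling the shifted term $f_n(t-1/n)$.
\end{itemize}
This is worth noting because the paper's applications of the theorem compute only $g$ and $h$ and never verify that hypothesis.
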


\begin{theorem}
    \label{thm:3}
    Ribas[Theorem-2]\cite{Ribas2018}:
    Consider the sequences of functions $\{F_n\}_{n\in\mathbb{N}}$, $\{G_n\}_{n\in\mathbb{N}}$ and $\{H_n\}_{n\in\mathbb{N}}$ with $F_n, G_n , H_n : [0,n] \cap \mathbb{Z}\rightarrow \mathbb{R}$ and defined recursively by the conditions:
    \begin{equation*}
        F_n(k) = G_n(k) + H_n(k)F_n(k+1) \ \ and \ \ F_n(n) = \mu
    \end{equation*}
    Also for all $x$, $0 \leq x \leq 1$, Let $f_n(x) := F_n(\floor{nx})$, $h_n(x) := n(1-H_n(\floor{nx}))$ and $g_n(x) := nG_n(\floor{nx})$. If the following conditions hold:
    \begin{itemize}
        \item Both $h_n(x)$ and $g_n(x)$ converges on $(0,1)$ and uniformly on a $[\epsilon,\epsilon^{\prime}]$ for all $0<\epsilon<\epsilon^{\prime}<1$ to continuous functions in $(0,1)$, $h(x)$ and $g(x)$, respectively.
        \item $f_n(x)$ converges uniformly in $[0,1]$ to a continuous function $f(x)$.
    \end{itemize}
    Then, $f(1) = \mu$ and $f$ satisfy in $(0,1)$,
    \begin{equation*}
        f^\prime(x) = f(x)h(x) - g(x)
    \end{equation*}
\end{theorem}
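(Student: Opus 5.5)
The statement to prove is Theorem~\ref{thm:3}, the backward recursion $F_n(k)=G_n(k)+H_n(k)F_n(k+1)$ with $F_n(n)=\mu$. I would not redo the limiting argument. Instead I would reduce it to Theorem~\ref{thm:2} by reversing the index.

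\textbf{Step 1: reversed sequences.} Define
\[
\tilde F_n(j):=F_n(n-j),\qquad \tilde G_n(j):=G_n(n-j),\qquad \tilde H_n(j):=H_n(n-j)
\]
for $j\in\{0,\dots,n\}$. Then $\tilde F_n(0)=F_n(n)=\mu$. Substituting $k=n-j$ gives
\[
\tilde F_n(j)=\tilde G_n(j)+\tilde H_n(j)\tilde F_n(j-1),
\]
which is exactly the forward recursion of Theorem~\ref{thm:2}.

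\textbf{Step 2: the hypotheses under $x\mapsto 1-x$.} The scaled functions should satisfy $\tilde f_n(x)=f_n(1-x)$, and likewise for $\tilde h_n$ and $\tilde g_n$, up to the floor discrepancy. Indeed $n-\floor{nx}$ equals $\floor{n(1-x)}$ when $nx$ is an integer, and equals $\floor{n(1-x)}+1$ otherwise.

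I expect this off-by-one to be the only real obstacle. I would handle it as follows.
\begin{itemize}
\item For $\tilde f_n$: uniform convergence of $f_n$ to a continuous $f$ on the compact set $[0,1]$ makes $f$ uniformly continuous. Hence $|F_n(m+1)-F_n(m)|\to0$ uniformly in $m$, so a one-step index shift does not affect the uniform limit. Therefore $\tilde f_n\to f(1-\cdot)$ uniformly on $[0,1]$.
\item For $\tilde h_n$ and $\tilde g_n$: the same reasoning works on each compact $[\epsilon,\epsilon']\subset(0,1)$. A shift by one index corresponds to a shift in $x$ of at most $1/n$. For large $n$ this stays inside a slightly larger compact interval, where convergence is uniform and the limit is uniformly continuous.
\item Pointwise convergence on $(0,1)$ follows from this local uniform convergence.
\end{itemize}
So the hypotheses of Theorem~\ref{thm:2} hold with limits $\tilde f(x)=f(1-x)$, $\tilde h(x)=h(1-x)$ and $\tilde g(x)=g(1-x)$, all continuous. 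If one prefers to avoid the shift altogether, one could instead define the reversed sequence so that its scaled version is exactly $f_n(1-x)$ away from the lattice points. I would still present the uniform-continuity argument, since it is cleaner.

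\textbf{Step 3: apply Theorem~\ref{thm:2} and reverse back.} Theorem~\ref{thm:2} gives $\tilde f(0)=\mu$, that is $f(1)=\mu$. It also gives, for $x\in(0,1)$,
\[
\tilde f'(x)=-\tilde f(x)\tilde h(x)+\tilde g(x).
\]
Since $\tilde f'(x)=-f'(1-x)$, putting $y=1-x$ yields
\[
-f'(y)=-f(y)h(y)+g(y),\qquad\text{so}\qquad f'(y)=f(y)h(y)-g(y)
\]
for $y\in(0,1)$, which is the claimed equation.
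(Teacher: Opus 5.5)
Your reduction is correct, and it takes a different route from the paper, which gives no proof of this statement at all: it is quoted from Ribas alongside the forward version, Theorem~\ref{thm:2}. The natural self-contained proof is direct. The recursion gives $F_n(k+1)-F_n(k)=\frac1n\bigl(h_n(k/n)F_n(k+1)-g_n(k/n)\bigr)$. Summing over $\floor{na}\le k<\floor{nx}$ for $0<a<x<1$ yields a Riemann sum, and uniform convergence of $f_n$ together with locally uniform convergence of $h_n,g_n$ sends it to $\int_a^x(hf-g)\,dt$. Hence $f(x)-f(a)=\int_a^x(hf-g)\,dt$, so $f'=hf-g$ on $(0,1)$, while $f(1)=\lim_n F_n(n)=\mu$ is immediate. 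That argument is as short as the forward one and needs no index reversal. Your approach instead makes the backward theorem a formal corollary of the forward one, so only one limiting argument is ever needed, and the sign flip in the ODE is explained by $x\mapsto 1-x$. The price is the discrepancy between $n-\floor{nx}=\lceil n(1-x)\rceil$ and $\floor{n(1-x)}$. You handle it correctly: the one-step index shift moves the argument by at most $1/n$. It is therefore absorbed by uniform convergence together with uniform continuity of the limits, on $[0,1]$ for $f$ and on a slightly enlarged compact subinterval of $(0,1)$ for $h$ and $g$. One small caution concerns your aside about redefining the reversal to avoid the shift. That would amount to $\tilde F_n(j)=F_n(n-1-j)$, which puts the initial value $\mu$ at index $-1$ and no longer fits the format of Theorem~\ref{thm:2}, so it is good that you do not rely on it.
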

\end{quote}

Here $f^\prime(x)$ is the derivative of $f(x)$.

\section{Secretary Problem with probabilistic second arrival}
\label{chap:relatedchap1}

In this case, each interviewee arrives at least once. After his/her
first appearance, interviewee $i$ may appear a second time with
probability $p \in [0,1]$. No interviewee arrives more than twice.

Note that:
\begin{enumerate} \label{notedresults}
\item If $k$ distinct interviewees are interviewed, and the employer
has to select the leading interviewee from them, then the probability
that the chosen interviewee is the overall best is $\frac{k}{n}$, as
in the secretary problem\cite{Ribas2018}.

\item If the interviewee is guaranteed to appear a second time, it is
always preferable for an employer not to accept an interviewee on his
first arrival, as other better interviewees might appear before
selection, which will increase the probability of success.

\item When an interviewee who is inferior to the best interviewee
interviewed so far arrives for the interview, it is irrelevant for the
employer to interview him. Thus, he/she can be rejected directly
\cite{Ribas2018}.

\item The only relevant information to select the overall best
interviewee with maximum probability is the number of distinct
interviewees has been interviewed so far and the number of times the
interviewer has interviewed the leading interviewee.

\item Therefore, after each interview, the employer can directly
reject all interviewees who have already been interviewed and are
inferior to the best interviewee so far \cite{Ribas2018}.  
\end{enumerate}
\begin{remark}
    \upshape
    \label{assumption}
Instead of rejecting inferior interviewees, we can model this by
saying that all his/her appearances have already been interviewed. For
example, suppose an employer has interviewed $k$ distinct
interviewees, in which the leading interviewee has appeared only once.
In that case, we can consider that all non-leading interviewees in
$k$, i.e., $(k-1)$, have exhausted all their appearances. So now the
employer has taken approximately $(p+1)(k-1)+1$ interviews from
$(1+p)n$ interviews.
\end{remark}

\subsection{Proposed strategy}

In {Observation Phase}, we interview $k$ different interviewees,
reject everyone but note down the leading among them, and the number
of times he/she has appeared for the interview.

Then comes the {Selection Phase}. After interviewing $k$ distinct
interviewees, select the first interviewee that satisfies any of the
following criteria in sequential order:
\begin{enumerate}
\item 
S/he is the leading interviewee and s/he arrives again.

\item 
A better (better interviewee than all the previously seen
interviewees) interviewee arrives for the first time, select him/her
with probability $(1-p)$. Note that he/she can appear again with
probability $p$.

\item 
A better (better interviewee than all the previously seen
interviewees) interviewee arrives a second time, select him/her.
\end{enumerate}

\subsection{Probability of success for probabilistic second arrival}

The probability of success following the proposed strategy is the
probability that the chosen interviewee is the overall best, assuming
we have rejected $k$ distinct interviewees.\\
The probability of success can be calculated by combining the
probability that the chosen interviewee is overall best, with respect
to two mutually exclusive and exhaustive events:
\begin{itemize}
\item $\chi_1$: Chosen interviewee is the overall best, assuming we
have rejected $k$ distinct interviewees and the leading interviewee
has appeared only once.

\item $\chi_2$: Chosen interviewee is the overall best, assuming we
have rejected $k$ distinct interviewees and the leading interviewee
has appeared twice.
\end{itemize}
Let us now introduce some notations. This is similar to that in
\cite{Ribas2018}. Let
\begin{itemize}
\item $\Phi_n(k)$: Probability that the chosen interviewee is overall
best, assuming we have rejected $k$ distinct interviewees and the
leading interviewee has appeared only once. This is the probability of
event $\chi_1$.

\item $\Psi_n(k)$: Probability that the chosen interviewee is overall
best, assuming we have rejected $k$ distinct interviewees and the
leading interviewee has appeared twice. This is the probability of
event $\chi_2$.

\item $\Upsilon_n(k)$: Probability that the leading interviewee has
appeared only once when the $k^{th}$ distinct interviewee makes
his/her initial appearance.

\item $F_n(k)$: Probability that the chosen interviewee is overall
best, assuming we have rejected $k$ distinct interviewees.
\end{itemize}

\subsubsection{Calculation of $\Phi_n(k)$}

We calculate the value of $\Phi_n(k)$ as follows: 
\begin{prop}
\label{prop:probArrival.1} For all natural numbers $n$ and $k$ where
$k < n$, we have,
\begin{equation*}
    \begin{split}
        &\Phi_n(k) = \frac{pak + (1-p)(1-pa)}{n} + \frac{(p+k)(1-pa)}{k+1}\Phi_n(k+1), \\
        &where \ \ a = \frac{1}{(1+p)(n-k)+1}
    \end{split}
\end{equation*}
and $\Phi_n(n) = p$.
\end{prop}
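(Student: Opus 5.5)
We prove the recursion by a first-step analysis: condition on what happens at the next appearance, in the state where $k$ distinct interviewees have been seen and the leader has appeared once. By the Remark, we model this state as follows. All $k-1$ non-leading interviewees have exhausted their appearances, and the $n-k$ unseen interviewees each contribute about $1+p$ pending appearances. The leader contributes one possible pending reappearance, with probability $p$. Since arrivals are uniformly random, the next relevant appearance is the leader's return with probability $pa$, where $a = 1/((1+p)(n-k)+1)$. This is the leader's single pending slot divided by the total $(1+p)(n-k)+1$ pending slots, weighted by the probability $p$ that the slot exists. With the complementary probability $1-pa$, the next event is the first arrival of a new, $(k+1)$-st distinct interviewee.

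The two branches are handled as follows.

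\textbf{Branch 1: the leader returns.} By criterion 1 of the strategy we accept him. He is the overall best exactly when the best of the $n$ lies among the first $k$ seen, which by item 1 of the notes has probability $k/n$. This contributes $pa\cdot k/n$.

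\textbf{Branch 2: a new interviewee arrives.} He is leading with probability $1/(k+1)$.
\begin{itemize}
\item If he is leading, we accept him with probability $1-p$, and he is the overall best with probability $(k+1)/n$. That gives $(1-pa)\frac{1}{k+1}(1-p)\frac{k+1}{n} = (1-pa)(1-p)/n$.
\item If he is leading and we decline (probability $p$), we move to the state ``$k+1$ seen, leader appeared once''. That gives $(1-pa)\frac{p}{k+1}\Phi_n(k+1)$.
\item If he is not leading (probability $k/(k+1)$), the Remark lets us treat him as exhausted, and the leader still has appeared once. So we are again in state $\Phi_n(k+1)$, giving $(1-pa)\frac{k}{k+1}\Phi_n(k+1)$.
\end{itemize}

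Summing the pieces gives
\[
\Phi_n(k) = \frac{pak + (1-p)(1-pa)}{n} + \frac{(p+k)(1-pa)}{k+1}\Phi_n(k+1).
\]
For the boundary, at $k=n$ all interviewees have been seen and the leader is the overall best. He is chosen exactly when he reappears, which happens with probability $p$, so $\Phi_n(n)=p$.

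The main obstacle is justifying the transition probability $pa$ exactly. In truth, the number of pending appearances of the unseen candidates is random, and whether the leader will return is not known. I would handle this with the Remark's convention, treating the remaining interview sequence as a uniformly random ordering of the $(1+p)(n-k)+1$ expected pending slots. This accepts that the statement is exact within that modeling convention, which is how the paper treats it, rather than attempting a fully exact conditional computation. A secondary check is that in branch 2 the identity of the top $k+1$ is independent of the arrival pattern. This follows from the random-order assumption, so the $1/(k+1)$ and $(k+1)/n$ factors are valid.
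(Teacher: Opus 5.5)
Your proposal is correct within the Remark's modeling convention, which the paper also relies on, and it is essentially the paper's first-step argument. The paper first conditions on whether the leader will return ($E$, probability $p$) or not ($\bar{E}$, probability $1-p$), computes the next-arrival cases with probabilities $a$, $(1-a)\frac{1}{k+1}$, $(1-a)\frac{k}{k+1}$ under $E$ and $\frac{1}{k+1}$, $\frac{k}{k+1}$ under $\bar{E}$, and then averages; you fold that averaging in from the start via the unconditional probabilities $pa$ and $1-pa$, which yields the identical recursion and the same boundary value $\Phi_n(n)=p$.
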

\begin{proof}
By hypothesis, when all $n$ distinct interviewees have been
interviewed, then leading is overall best, and he/she has only
appeared once, so he/she may appear again with probability $p$. So
$\Phi_n(n)=p$,\\
Now, for the 
case when $k<n$, 
the remaining interviews can be categorized into two mutually
exclusive events:
\begin{enumerate}
\item \textit{$E$: The leading interviewee will appear again}\\
As each interviewee can appear a second time with probability $p$, the
probability that the leading interviewee will appear again is \[P(E) =
p\] Let $P(S_1)$ be $\Phi_n(k)|E$ i.e, probability obtained with
respect to event $E$.\\ %
For the next interviewee that arrives, the following scenarios are
possible:

\begin{description}
\item[Case-1:] \textit{$X_1$: Next interviewee is the second
appearance of the leading interviewee.}
\begin{itemize}
\item As we have observed $k$ distinct interviewees out of $n$
distinct interviewees, the number of remaining distinct interviewees
is $n-k$. Since each of these interviewees can appear $(1+p)$ times,
and the second appearance of the current leading interviewee is also
yet to occur, the total number of interviewees yet to be seen is
$(1+p)(n-k)+1$. So, the probability that the next interviewee is the
second appearance of the leading interviewee is \[P(X_1) =
\frac{1}{(1+p)(n-k)+1}\]

\item And according to the proposed strategy, this interviewee will
be our chosen interviewee, so the probability that the chosen
interviewee is overall best in this case is $\frac{k}{n}$
[Page-\pageref{notedresults}, 
Point-1][\ref{notedresults}]. Therefore, \[P(S_1|X_1) = \frac{k}{n}\]
\end{itemize}
        
\item[Case-2:] \textit{$X_2$: Next interviewee is a new leading
interviewee.}
\begin{itemize}
\item For the next interviewee to be a new leading interviewee, the
following two events should happen together:
\begin{enumerate}
\item Next interviewee should not be the second appearance of the
leading interviewee, which happens with probability $(1-P(X_1))$.

\item Next interviewee should be the new leading interviewee. As the
leading interviewee will be only one among the $k+1$ distinct
interviewees interviewed so far, the probability of this happening is
$(\frac{1}{k+1})$
\end{enumerate}

The total probability is the product of the probabilities of both
events, i.e. \[P(X_2) =
\left(1-\frac{1}{(1+p)(n-k)+1}\right)\frac{1}{k+1}\]

\item And according to the proposed strategy, we will accept this
interviewee with probability $1-p$ and reject him/her with probability
$p$. Once rejected, the interview process will continue%
(but we have seen $k+1$ different candidates). So the probability
$P(S_1)$ with respect to event $X_2$ is. 
\[P(S_1|X_2) = \left(1-p\right)\left(\frac{k+1}{n}\right) +
p\left(\Phi_n(k+1)\right)\]
\end{itemize}

\item[Case-3:] \textit{$X_3$: Next interviewee is a new non-leading
interviewee.}
\begin{itemize}
\item For the next interviewee to be a new non-leading interviewee,
the following two events should happen together:
\begin{enumerate}
\item Next interviewee should not be the second appearance of the
leading interviewee, which happens with probability $(1-P(X_1))$.

\item Next interviewee should be a new non-leading interviewee. As
there are $k$ non-leading interviewees among the $k+1$ distinct
interviewees interviewed so far. Therefore the probability is
$\left(\frac{k}{k+1}\right)$
\end{enumerate}

The total probability is the product of the probabilities of the above
two events together, i.e. 
\[P(X_3) = \left(1-\frac{1}{(1+p)(n-k)+1}\right)\frac{k}{k+1}\]

\item And according to the proposed strategy, we will always reject
this interviewee. So the success probability remains the same, but the
number of distinct interviewees interviewed has been increased by one.
Thus, the probability is. 
\[P(S_1|X_3) = \Phi_n(k+1)\]
\end{itemize} 
\end{description}
By the rule of total probability,
\[ P(S_1)=P(S_1|X_1)P(X_1) + P(S_1|X_2)P(X_2) + P(S_1|X_3)P(X_3) \]
\begin{equation*}
\begin{split}
P(S_1) &=
\left(\frac{k}{n}a\right)+\left(\left(1-p\right)\left(\frac{k+1}{n}\right)
+ p\left(\Phi_n(k+1)\right)\right)\left((1-a)\frac{1}{k+1}\right) \\
&+ \Phi_n(k+1)\left((1-a)\frac{k}{k+1}\right)\\
&= \frac{ka+(1-a)(1-p)}{n} + \frac{(1-a)(p+k)}{k+1}\Phi_n(k+1)\\
\end{split}
\end{equation*}
where $a = \frac{1}{(1+p)(n-k)+1}$.
    
\item \textit{$\bar{E}:$ The leading interviewee will not appear
again}\\ As each interviewee may appear a second time with probability
$p$, the probability that the leading interviewee will not appear
again is 
\[P(\bar{E}) = 1-p\] 
Let $P(S_2)$ be $\Phi_n(k)|\Bar{E}$ i.e, probability obtained with
respect to event $\Bar{E}$.\\ 
For the next interviewee that arrives following scenarios are
possible:
\begin{description}
\item[Case-1:] \textit{$Y_1$: Next interviewee is a new leading
interviewee.}
\begin{itemize}
\item Since the current leading interviewee cannot appear again, the
next interviewee must necessarily be distinct. Among $k+1$ distinct
interviewees observed so far, only one can be the leading interviewee.
Therefore, the probability that the next interviewee is a new leading
interviewee is \[ P(Y_1) = \frac{1}{k+1}\]

\item And according to the proposed strategy, we will accept this
interviewee with probability $1-p$ and reject him/her with probability
$p$. Once rejected, the interview process will continue for the next
interviewee. Therefore, the probability $P(S_2)$ with respect to event
$Y_1$ is 
\[P(S_2|Y_1) = \left(1-p\right)\left(\frac{k+1}{n}\right) +
p\Phi_n(k+1)\]

This is similar to $P(S_1|X_2)$
\end{itemize}

\item[Case-2:] \textit{$Y_2$: Next interviewee is a new non-leading
interviewee.}
\begin{itemize}
\item Since the current leading interviewee cannot appear again, the
next interviewee must necessarily be distinct. Among $k+1$ distinct
interviewees observed so far, $k$ distinct interviewees will be
non-leading interviewees. Therefore, the probability that the next
interviewee is a new non-leading interviewee is 
\[ P(Y_2) = \frac{k}{k+1}\]

\item And according to the proposed strategy, we will always reject
this interviewee. 
As now the interviewer has interviewed $k+1$ distinct interviewees.
Therefore, the probability is 
\[P(S_2|Y_2) = \Phi_n(k+1)\]
\end{itemize}  
\end{description}
By applying the rule of total conditional probability,
\[ P(S_2) = P(S_2|Y_1)P(Y_1) + P(S_2|Y_2)P(Y_2) \]
\begin{equation*}
\begin{split}
P(S_2) &= \left(\left(1-p\right)\left(\frac{k+1}{n}\right) +
p\Phi_n(k+1)\right)\left(\frac{1}{k+1}\right) +
\Phi_n(k+1)\left(\frac{k}{k+1}\right) \\
&= \frac{1-p}{n} + \frac{p+k}{k+1}\Phi_n(k+1)
\end{split}
\end{equation*}
\end{enumerate}
By combining the probabilities from both event $E$ and $\Bar{E}$ using
the rule of total probability, 
\[ \Phi_n(k) = P(E)P(S_1) + P(\Bar{E})P(S_2) \]
\begin{equation*}
\begin{split}
\Phi_n(k) &= p\left(\frac{ka+(1-a)(1-p)}{n} +
\frac{(1-a)(p+k)}{k+1}\Phi_n(k+1)\right) \\ &+
(1-p)\left(\frac{1-p}{n} + \frac{p+k}{k+1}\Phi_n(k+1)\right) \\
&= \frac{pak + (1-p)(1-pa)}{n} + \frac{(p+k)(1-pa)}{k+1}\Phi_n(k+1)
\end{split}
\end{equation*}
where 
$a = \frac{1}{(1+p)(n-k)+1}$.
\end{proof}

\begin{prop}
\label{prop:probArrival.7} The sequence of functions $\hat{\Phi}_n(x)
:= \Phi_n(\floor{nx})$ converges uniformly on $[0,1]$ for $\Phi(1) =p$
to
\begin{equation*}
\begin{split}
\Phi^\prime(x) = \left( \frac{1-p}{x} +\frac{p}{(1+p)(1-x)} \right) 
\Phi(x) - \left( \frac{px}{(1+p)(1-x)} + (1-p) \right)
\end{split}
\end{equation*}
\end{prop}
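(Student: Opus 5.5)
The plan is to apply Theorem~\ref{thm:3} (Ribas, Theorem-2) to the backward recursion of Proposition~\ref{prop:probArrival.1}. I write it as
\[
\Phi_n(k) = G_n(k) + H_n(k)\Phi_n(k+1), \qquad \Phi_n(n)=p,
\]
with
\[
G_n(k) = \frac{pak+(1-p)(1-pa)}{n}, \qquad H_n(k)=\frac{(p+k)(1-pa)}{k+1}, \qquad a=\frac{1}{(1+p)(n-k)+1}.
\]
The boundary value $\mu=p$ gives $\Phi(1)=p$ directly.

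\textbf{Step 1: the limit of $g_n$.} Put $k=\floor{nx}$ with $x\in(0,1)$. Then $na\to \frac{1}{(1+p)(1-x)}$, uniformly on $[\epsilon,\epsilon']$, since $n-k\geq n(1-\epsilon')$ there. Hence
\[
g_n(x)=nG_n(\floor{nx}) = pak+(1-p)(1-pa) \to \frac{px}{(1+p)(1-x)}+(1-p)=:g(x),
\]
because $ak\to x/((1+p)(1-x))$ and $a\to0$.

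\textbf{Step 2: the limit of $h_n$.} Expand
\[
1-H_n(k)=\frac{(k+1)-(p+k)(1-pa)}{k+1}=\frac{(1-p)+pa(p+k)}{k+1}.
\]
Multiplying by $n$ and using $n/(k+1)\to 1/x$ and $a(p+k)\to x/((1+p)(1-x))$ gives
\[
h_n(x)\to \frac{1-p}{x}+\frac{p}{(1+p)(1-x)}=:h(x).
\]
Both limits are continuous on $(0,1)$. Uniformity on compact subintervals is routine, since the denominators $k+1$ and $(1+p)(n-k)+1$ are bounded below by constant multiples of $n$.

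\textbf{Step 3: uniform convergence of $\hat\Phi_n$.} This is the hypothesis of Theorem~\ref{thm:3} that needs real work, and I expect it to be the main obstacle. My approach is to use the product form of the solution,
\[
\Phi_n(k)=\sum_{j=k}^{n-1}G_n(j)\prod_{i=k}^{j-1}H_n(i) \;+\; p\prod_{i=k}^{n-1}H_n(i).
\]
Since $0\le H_n\le1$, the products are Riemann–product approximations of $\exp\!\big(-\int_x^y h\big)$, and the sum is a Riemann sum for $\int_x^1 g(y)e^{-\int_x^y h}\,dy$. This yields pointwise convergence to the explicit solution $f$ of the ODE. To upgrade to uniform convergence, I use that $\Phi_n$ and $f$ take values in $[0,1]$ and that the increments $|\Phi_n(k)-\Phi_n(k+1)|$ are $O(1/k)+O(a)$. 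The real difficulty is near the two endpoints, where $h$ blows up. Near $x=1$, $a$ becomes $O(1)$, but only $O(1)$ indices are involved, so the contribution is small. Near $x=0$, the factor $\prod H\approx (k/n)^{1-p}$ stays controlled, so $\Phi_n(k)\to0$ uniformly as $k/n\to0$ and $f(0)=0$. An equicontinuity (Arzelà–Ascoli type) argument then gives uniform convergence on $[0,1]$. If that proves messy, the fallback is to verify directly that $\Phi_n(\floor{nx})-f(x)$ satisfies a discrete recursion whose error terms sum to $o(1)$ uniformly.

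\textbf{Step 4: conclusion.} With Steps 1–3, Theorem~\ref{thm:3} gives $\Phi'(x)=h(x)\Phi(x)-g(x)$ on $(0,1)$ with $\Phi(1)=p$, which is exactly the stated equation.
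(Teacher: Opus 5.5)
\paragraph{Main gap: Step 3 fails at $x=1$.}
Your Steps 1, 2 and 4 are the paper's proof. The paper derives the same $g$ and $h$ by the same expansions and invokes Theorem~\ref{thm:3}, but never checks that $\hat{\Phi}_n$ converges uniformly on $[0,1]$. You are right that this hypothesis is the real issue. However, your Step~3 breaks down at $x=1$, and not merely technically: for $0<p<1$ the hypothesis is false.

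Your claim that near $x=1$ ``only $O(1)$ indices are involved, so the contribution is small'' ignores that $a=\Theta(1)$ there. Your own increment bound $O(1/k)+O(a)$ is therefore of order one for those indices. Already at $k=n-1$, where $a=1/(2+p)$, the recursion gives
\[
\Phi_n(n-1)-\Phi_n(n)=\frac{p(1-p)}{2+p}+\frac{(2-p)(1-2p)}{(2+p)n}.
\]
Hence $\hat{\Phi}_n(1-1/n)-\hat{\Phi}_n(1)\to p(1-p)/(2+p)\neq 0$, which rules out uniform convergence to any continuous limit. More generally, for $1\le j\ll n$,
\[
1-\Phi_n(n-j)\approx(1-p)\prod_{i=1}^{j}\Bigl(1-\frac{p}{(1+p)i+1}\Bigr)\asymp j^{-p/(1+p)},
\]
so $\Phi_n$ climbs from $p$ toward $1$ in a boundary layer at $k=n$.

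\paragraph{Root cause and what survives.}
The function $h$ has a non-integrable singularity $\frac{p}{(1+p)(1-x)}$ at $x=1$, and your product formula exposes it. The boundary term $p\prod_{i=k}^{n-1}H_n(i)$ contains the factor $\prod_{j=1}^{n-k}\bigl(1-\frac{p}{(1+p)j+1}\bigr)\asymp (n-k)^{-p/(1+p)}$, which tends to $0$. So the boundary value $p$ is forgotten, and for $x\in(0,1)$ the limit is
\[
f(x)=\int_x^1 g(y)\,e^{-\int_x^y h(t)\,dt}\,dy .
\]
This $f$ satisfies the ODE but has $f(1^-)=1$. Equivalently, near $x=1$ the ODE reads $\Phi'=\frac{p(\Phi-x)}{(1+p)(1-x)}+O(1)$, so every bounded solution has $\Phi(1^-)=1$. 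For $0<p<1$ there is no continuous solution with $\Phi(1)=p$ at all.

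Consequently Theorem~\ref{thm:3} cannot be applied, and since the paper's proof skips this hypothesis, the defect is in the statement itself. As written, the proposition holds only for $p\in\{0,1\}$. What your Step~3 machinery can actually deliver is:
\begin{itemize}
\item uniform convergence to $f$ on $[\epsilon,1-\delta]$, and on $[0,1-\delta]$ using your argument near $0$;
\item the boundary condition $\Phi(1)=p$ replaced by $\Phi(1^-)=1$;
\item the ODE on $(0,1)$, obtained by differentiating the integral formula rather than from Theorem~\ref{thm:3}.
\end{itemize}

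\paragraph{A smaller slip.}
Your claim $f(0)=0$ fails at $p=1$. There the recursion is solved exactly by $\Phi_n(k)=\frac{2}{3}+\frac{k}{3n}$, so $f(0)=\frac{2}{3}$.
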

\begin{proof}
From Proposition \ref{prop:probArrival.1} we know that,
\[
    \Phi_n(k) = G_n(k) + H_n(k)\Phi_n(k+1)
\]
where $G_n(k) = \frac{pak + (1-p)(1-pa)}{n}$ and $H_n(k) =
\frac{(p+k)(1-pa)}{k+1}$ and $a=\frac{1}{(1+p)(n-k)+1}$.\\ 
From Theorem \ref{thm:3}, where $h_n(x) := n(1-H_n(\floor{nx}))$ and
$g_n(x) := nG_n(\floor{nx})$ we have, 
\[ g_n(x) = n\left(\frac{pa\floor{nx} + (1-p)(1-pa)}{n} \right) =
pa\floor{nx} + (1-p)(1-pa) \] \[ h_n(x) =
n\left(1-\left(\frac{(p+\floor{nx})(1-pa)}{\floor{nx}+1}\right)\right)
\]
As $n\rightarrow \infty$, $g_n(x) \rightarrow g(x)$, $h_n(x)
\rightarrow h(x)$ and $\floor{nx} \approx nx$.
\begin{itemize}
\item Let us estimate the value of $g(x)$ where
$a=\frac{1}{(1+p)(n-nx)+1}$, 
\begin{equation*}
\begin{split}
g(x) &= \underset{n \rightarrow \infty}{\lim} \left( panx +
(1-p)(1-pa) \right) \\
&= \underset{n \rightarrow \infty}{\lim} \left(
panx \right) + (1-p) \underset{n \rightarrow \infty}{\lim} \left(1-pa
\right)
\end{split}
\end{equation*}
Now let us consider each part separately as 
$\underset{n \rightarrow \infty}{\lim}$.
\begin{equation*}
\begin{split}
panx &= \frac{pnx}{(1+p)(n-nx)+1} = \frac{pnx}{n(1+p)(1-x)+1} \\
&= \frac{pnx}{(1+p)n(1-x)} \left(\frac{1}{1+\frac{1}{(1+p)n(1-x)}}
\right)
\end{split}
\end{equation*}
Using expansion for $\frac{1}{1+\epsilon}$ where $\epsilon =
\frac{1}{(1+p)n(1-x)}$. We have,
\begin{equation} \label{eq:panx}
\begin{split}
panx &= \frac{px}{(1+p)(1-x)} \left(1-\frac{1}{(1+p)n(1-x)} +
O(n^{-2}) \right) \\ %
&= \frac{px}{(1+p)(1-x)} + O(n^{-1})
\end{split}
\end{equation}
Let's now estimate $1-pa$,
\begin{equation*}
\begin{split}
pa &= \frac{p}{(1+p)(n-nx)+1} = \frac{p}{n(1+p)(1-x)+1} \\ %
&= \frac{p}{(1+p)n(1-x)} \left(\frac{1}{1+\frac{1}{(1+p)n(1-x)}}
\right)
\end{split}
\end{equation*}
Using expansion for $\frac{1}{1+\epsilon}$ where $\epsilon =
\frac{1}{(1+p)n(1-x)}$. We have,
\begin{equation} \label{eq:1minuspa}
\begin{split}
pa &= \frac{p}{(1+p)n(1-x)} \left(1-\frac{1}{(1+p)n(1-x)} + O(n^{-2})
\right) \\ &= \frac{p}{(1+p)n(1-x)} + O(n^{-2}) \\ %
1-pa &= 1-\frac{p}{(1+p)n(1-x)} + O(n^{-2})
\end{split}
\end{equation}
Substituting values for $(panx)$ and $(1-pa)$ from equation
\ref{eq:panx} and \ref{eq:1minuspa}. We get, %
as $n\rightarrow \infty$
\begin{equation*}
g(x) = \frac{px}{(1+p)(1-x)} + O(n^{-1}) + (1-p) 
\left(1-\frac{p}{(1+p)n(1-x)} + O(n^{-2}) \right)
\end{equation*}

As $n \rightarrow \infty$ then $O(n^{-1})$ and higher order term
vanishes, So,
\begin{equation} \label{eq:gx}
g(x) = \frac{px}{(1+p)(1-x)} + (1-p)
\end{equation}
\item Let us estimate the value of $h(x)$:
\[ h(x) = \underset{n \rightarrow
\infty}{\lim}n\left(1-\left(\frac{p+nx}{nx+1} (1-pa)\right)\right) \]
Let us consider $\underset{n \rightarrow \infty}{\lim}
\left(\frac{p+nx}{nx+1} \right)$.
\begin{equation*}
\begin{split}
\underset{n \rightarrow \infty}{\lim} \left(\frac{p+nx}{nx+1} \right)
= \left(\frac{p+nx}{nx} \right) \left( \frac{1}{1+\frac{1}{nx}}
\right)
\end{split}
\end{equation*}
Using expansion for $\frac{1}{1+\epsilon}$ where $\epsilon =
\frac{1}{nx}$. We have,
\begin{equation}
\label{eq:pplusnx}
\begin{split}
\underset{n \rightarrow \infty}{\lim} \left(\frac{p+nx}{nx+1} \right)
&= \left(\frac{p+nx}{nx} \right) \left( 1- \frac{1}{nx} + O(n^{-2})
\right) \\ %
&= \left(\frac{p}{nx} +1 \right) \left( 1- \frac{1}{nx} + O(n^{-2})
\right)\\ &= \frac{p}{nx} +1 - \frac{1}{nx} + O(n^{-2}) \\ %
&= 1-\frac{1-p}{nx} + O(n^{-2})
\end{split}
\end{equation}
Substituting values from equations \ref{eq:1minuspa} and
\ref{eq:pplusnx}. We get,
\begin{equation*}
\begin{split}
h(x) &= n\left(1-\left(1-\frac{1-p}{nx} + O(n^{-2}) \right)\left(
1-\frac{p}{(1+p)n(1-x)} + O(n^{-2})\right)\right) \\ %
&= n\left(1-\left(1-\frac{1-p}{nx} -\frac{p}{(1+p)n(1-x)} + O(n^{-2})
\right)\right) \\  %
&= n\left(\frac{1-p}{nx} +\frac{p}{(1+p)n(1-x)} + O(n^{-2})\right) \\
&= \left(\frac{1-p}{x} +\frac{p}{(1+p)(1-x)} + O(n^{-1})\right)
\end{split}
\end{equation*}
As $n \rightarrow \infty$ then $O(n^{-1})$ and higher order term
vanishes, So,
\begin{equation} \label{eq:hx}
h(x) = \frac{1-p}{x} +\frac{p}{(1+p)(1-x)}
\end{equation}
\end{itemize}
Using Theorem \ref{thm:3} and Substituting $g(x)$ and $h(x)$ from
above equations \ref{eq:gx}, \ref{eq:hx}, we get,
\begin{equation*}
\Phi^\prime(x) = \left( \frac{1-p}{x} +\frac{p}{(1+p)(1-x)} \right)  \Phi(x) - \left( \frac{px}{(1+p)(1-x)} + (1-p) \right)
\end{equation*}
\end{proof}

\subsubsection{Calculation of $\Psi_n(k)$}

We calculate the value of $\Psi_n(k)$ as follows: 
\begin{prop}
\label{prop:probArrival.3} For all natural numbers $n$ and $k$ where
$k < n$, we have, \[ \Psi_n(k) =
\frac{1-p}{n} + \frac{p}{k+1}\Phi_n(k+1) + \frac{k}{k+1}\Psi_n(k+1) \]
and $\Phi_n(n) = p, \Psi_n(n) = 0$
\end{prop}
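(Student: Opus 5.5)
We will follow the first-step analysis used in the proof of Proposition \ref{prop:probArrival.1}, conditioning on what the next new event is once $k$ distinct interviewees have been rejected and the leading one has already appeared twice. The key simplification is that in the state described by $\Psi_n(k)$ the leading interviewee has exhausted both appearances. By the Remark, every interviewee seen so far can then be treated as fully exhausted. So the next relevant arrival is necessarily a new distinct interviewee, and there is no analogue of the case $X_1$ or of the split into $E$ and $\bar E$.

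First we settle the boundary values. When $k=n$, all $n$ distinct interviewees have been seen and the leading one is the overall best. In the $\Psi$ state he has already appeared twice and was rejected, so he can never be chosen, giving $\Psi_n(n)=0$. The value $\Phi_n(n)=p$ is taken from Proposition \ref{prop:probArrival.1}.

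For $k<n$ we split on the $(k+1)$-st distinct interviewee. With probability $\frac{1}{k+1}$ he is the new leading interviewee. By the strategy he is accepted at once with probability $1-p$, and then succeeds with probability $\frac{k+1}{n}$ by Point 1 of the notes. Otherwise, with probability $p$, he is rejected. We then have $k+1$ distinct interviewees, and a leading one who has appeared only once and may return with probability $p$; this is precisely the state of $\Phi_n(k+1)$. This case therefore contributes
\[
\frac{1}{k+1}\left((1-p)\frac{k+1}{n}+p\,\Phi_n(k+1)\right)=\frac{1-p}{n}+\frac{p}{k+1}\Phi_n(k+1).
\]
With probability $\frac{k}{k+1}$ the new interviewee is non-leading and is rejected. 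The leading interviewee is unchanged and has still appeared twice, so the state becomes $\Psi_n(k+1)$. Summing the two cases by total probability gives the claimed recursion.

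The step needing the most care is justifying that in the $\Psi$ state the next appearance is always a new distinct interviewee. Second appearances of non-leading interviewees are irrelevant because, by Points 3 and 5, they are rejected outright, and the Remark lets us regard them as already consumed. We must also check that the relative rank of the new interviewee is uniform among the $k+1$ seen, which holds because the arrival order of distinct interviewees is a uniform random permutation, independent of the reappearance coins. Finally, the rejected-with-probability-$p$ branch must land exactly in the $\Phi_n(k+1)$ state, consistent with how $\Phi$ was defined in Proposition \ref{prop:probArrival.1}.
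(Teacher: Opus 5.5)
Your proposal is correct and follows the paper's proof essentially step for step. You use the same boundary argument for $\Psi_n(n)=0$, and the same split of the $(k+1)$-st distinct interviewee into a new leader (probability $\frac{1}{k+1}$, contributing $(1-p)\frac{k+1}{n}+p\,\Phi_n(k+1)$) or a non-leader (probability $\frac{k}{k+1}$, leading to $\Psi_n(k+1)$), combined by total probability; your extra remarks only make explicit what the paper leaves implicit. One small correction: the uniformity of the new relative rank follows from the rank assignment being independent of the whole arrival process, not from the first-appearance order being independent of the reappearance coins, which may fail since twice-appearing interviewees tend to show up earlier.
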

\begin{proof}
By hypothesis, when all $n$ distinct interviewees have been
interviewed, then the leading interviewee is the overall best, and
he/she has already appeared twice, so he/she will not appear again.
Thus, $\Psi_n(n)=0$,\\ %
Now for the general case $k$, since the leading interviewee has
already appeared twice, he/she will not appear again. So the next
interviewee interviewed can only be one of the following:
\begin{description}
\item[Case-1:] \textit{$Z_1$: Next interviewee is a new leading
interviewee.}
\begin{itemize}
\item Since the current leading interviewee cannot appear again, the
next interviewee must necessarily be distinct. Among $k+1$ distinct
interviewees observed so far, only one can be the leading interviewee.
Therefore, the probability that the next interviewee is a new leading
interviewee is %
\[ P(Z_1) = \frac{1}{k+1}\]
\item And according to the proposed strategy, we will accept this
interviewee with probability $1-p$ and reject him/her with probability
$p$. Therefore, the probability $\Psi_n(k)$ with respect to event
$Z_1$ is %
\[\Psi_n(k)|Z_1 = \left(1-p\right)\left(\frac{k+1}{n}\right) +
p\left(\Phi_n(k+1)\right)\]
\end{itemize}

\item[Case-2:] \textit{$Z_2$: Next interviewee is a new non-leading
interviewee.}
\begin{itemize}
\item Since the current leading interviewee cannot appear again, the
next interviewee must necessarily be distinct. Among $k+1$ distinct
interviewees observed so far, $k$ distinct interviewees will be
non-leading interviewees. Therefore, the probability that the next
interviewee is a new non-leading interviewee is %
\[ P(Z_2) = \frac{k}{k+1}\]

\item And according to the proposed strategy, we will always reject
this interviewee. 
Therefore, the probability is %
\[\Psi_n(k)|Z_2 = \Psi_n(k+1)\]
\end{itemize}   
\end{description}
By the rule of total probability,
\[\Psi_n(k) = (\Psi_n(k)|Z_1)P(Z_1) + (\Psi_n(k)|Z_2)P(Z_2) \]
\begin{equation*}
\begin{split}
\Psi_n(k) &= \left((1-p)\left(\frac{k+1}{n}\right) +
p\Phi_n(k+1)\right)\frac{1}{k+1} + \Psi_n(k+1)\frac{k}{k+1} \\%
&= (1-p)\frac{k+1}{n}\frac{1}{k+1} + p\Phi_n(k+1)\frac{1}{k+1} +
\Psi_n(k+1)\frac{k}{k+1} \\ %
&= \frac{1-p}{n} + \frac{p}{k+1}\Phi_n(k+1) + \frac{k}{k+1}\Psi_n(k+1)
\end{split}
\end{equation*}
\end{proof}

\begin{prop}
\label{prop:probArrival.8} 
The sequence of functions $\hat{\Psi}_n(x) := \Psi_n(\floor{nx})$
converges uniformly on $[0,1]$ for $\Psi(1)=0$ to
\begin{equation*}
\Psi^\prime(x) = \frac{1}{x} \Psi(x) - \left( 1-p +\frac{p}{x}\Phi(x)
\right)
\end{equation*}
\end{prop}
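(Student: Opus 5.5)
We follow the same route as in the proof of Proposition \ref{prop:probArrival.7}: cast the recursion of Proposition \ref{prop:probArrival.3} in the form required by Theorem \ref{thm:3}, compute the limiting coefficient functions $g$ and $h$, and read off the differential equation.

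From Proposition \ref{prop:probArrival.3} we have
\[
\Psi_n(k) = G_n(k) + H_n(k)\Psi_n(k+1), \qquad \Psi_n(n)=0,
\]
with $H_n(k) = \frac{k}{k+1}$ and
\[
G_n(k) = \frac{1-p}{n} + \frac{p}{k+1}\Phi_n(k+1).
\]
The inhomogeneous term $G_n$ depends on the already-studied sequence $\Phi_n$, which we treat as a known input. Theorem \ref{thm:3} then applies with $\mu = 0$, so the limit satisfies $\Psi(1)=0$.

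Next we compute the scaled coefficients. For the multiplier,
\[
h_n(x) = n\Bigl(1-\frac{\floor{nx}}{\floor{nx}+1}\Bigr) = \frac{n}{\floor{nx}+1}.
\]
This converges to $h(x)=\frac1x$, uniformly on every $[\epsilon,\epsilon']\subset(0,1)$, because $|n/(\floor{nx}+1) - 1/x| \le 2/(n x^2)$ there. For the forcing term,
\[
g_n(x) = (1-p) + \frac{pn}{\floor{nx}+1}\,\Phi_n(\floor{nx}+1).
\]
Here $\frac{n}{\floor{nx}+1}\to \frac1x$ uniformly on $[\epsilon,\epsilon']$ by the same estimate. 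Moreover $\Phi_n(\floor{nx}+1) = \hat\Phi_n(x+\tfrac1n)$ up to the floor shift. Proposition \ref{prop:probArrival.7} gives $\hat\Phi_n\to\Phi$ uniformly with $\Phi$ continuous, and uniform continuity of $\Phi$ on compacts then gives $\hat\Phi_n(x+\tfrac1n)\to\Phi(x)$ uniformly on $[\epsilon,\epsilon']$. A product of uniformly convergent, uniformly bounded sequences converges uniformly, so
\[
g(x) = 1-p+\frac{p}{x}\Phi(x),
\]
which is continuous on $(0,1)$.

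Substituting into $f' = fh - g$ from Theorem \ref{thm:3} yields
\[
\Psi'(x) = \frac1x\Psi(x) - \Bigl(1-p+\frac px\Phi(x)\Bigr),
\]
as claimed.

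The main obstacle is the second hypothesis of Theorem \ref{thm:3}: that $\hat\Psi_n$ converges uniformly on $[0,1]$ to a continuous function. Theorem \ref{thm:3} only identifies the limit; it does not supply convergence. The previous proposition glossed over the corresponding point, but here it should be addressed.

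We would first try a direct argument. We have $0\le\Psi_n\le1$. The increments satisfy
\[
|\Psi_n(k)-\Psi_n(k+1)| \le \frac{1}{k+1}|\Psi_n(k+1)| + |G_n(k)| = O\Bigl(\frac1k\Bigr),
\]
so the family $\{\hat\Psi_n\}$ is equicontinuous on $[\epsilon,1]$. By Arzelà–Ascoli, every subsequence has a further subsequence converging uniformly on $[\epsilon,1]$. Along any such subsequence, the discrete recursion passes to the limiting integral equation. That linear ODE with $\Psi(1)=0$ has a unique solution, so the whole sequence converges. Near $x=0$ we handle uniformity separately. We can do this by unrolling the recursion explicitly, since
\[
\prod_{j=k}^{m-1}\frac{j}{j+1} = \frac{k}{m}
\]
gives the closed form
\[
\Psi_n(k) = \sum_{m=k}^{n-1}\frac{k}{m}\,G_n(m).
\]
This closed form is a Riemann sum for
\[
x\int_x^1 \frac{g(t)}{t}\,dt,
\]
and it directly shows both uniform convergence and that the limit tends to $0$ as $x\to0$. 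In fact, this closed form could replace the appeal to compactness entirely, and we would use it if the Arzelà–Ascoli route becomes cumbersome.
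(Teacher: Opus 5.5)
Your core argument is the paper's: the same $G_n(k)=\frac{1-p}{n}+\frac{p}{k+1}\Phi_n(k+1)$ and $H_n(k)=\frac{k}{k+1}$ from Proposition~\ref{prop:probArrival.3}, the same limits $h(x)=\frac{1}{x}$ and $g(x)=1-p+\frac{p}{x}\Phi(x)$, and Theorem~\ref{thm:3} with $\mu=0$. You are more careful than the paper in two places.
First, you note that $\Phi_n(\floor{nx}+1)=\hat\Phi_n(x+\frac{1}{n})$ exactly and use uniform continuity of $\Phi$; the paper simply replaces $\Phi_n(nx+1)$ by $\Phi_n(nx)$.
Second, you try to verify that $\hat\Psi_n$ converges uniformly; the paper never checks this and just writes it into the statement.
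Your verification still depends on the uniform convergence of $\hat\Phi_n$. Proposition~\ref{prop:probArrival.7} asserts that convergence but does not prove it either.

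One assertion in that supplementary part is false, although it does not affect the displayed equation or $\Psi(1)=0$: the limit need not tend to $0$ as $x\to0$.
From the recursion at $k=0$, $\Psi_n(0)=\frac{1-p}{n}+p\,\Phi_n(1)\to p\,\Phi(0)$, and $\Phi(0)$ vanishes for $p<1$ but not for $p=1$.
At $p=1$ the equation of Proposition~\ref{prop:probArrival.7} reads $\Phi'=\frac{\Phi-x}{2(1-x)}$ with $\Phi(1)=1$. Its continuous solution is $\Phi(x)=\frac{2+x}{3}$.
Then $\Psi(x)=x\int_x^1\frac{\Phi(t)}{t^2}\,dt=\frac{1}{3}\left(2-2x-x\ln x\right)$ satisfies the claimed equation with $\Psi(1)=0$, yet $\Psi(0^+)=\frac{2}{3}$.

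For the same reason, the ``Riemann sum'' remark does not by itself give uniformity on $[0,\epsilon]$. The integrand $g(t)/t$ contains $p\Phi(t)/t^2$, which is not integrable at $0$ when $\Phi(0)\neq0$. Instead, for $k\ge1$ split the closed form as $\Psi_n(k)=\frac{(1-p)k}{n}\sum_{m=k}^{n-1}\frac{1}{m}+p\sum_{m=k}^{n-1}\frac{k}{m(m+1)}\Phi_n(m+1)$.
The first term is at most $(1-p)\left(\frac{1}{n}+x\ln\frac{1}{x}\right)$ with $x=k/n$.
The second is $p$ times a weighted sum of values of $\Phi_n$ whose weights total $1-\frac{k}{n}$. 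All but about $\frac{1}{M}$ of that weight sits on $k\le m\le Mk$.
Hence the second term is uniformly close to $p\,\Phi(0)$ for small $x$, by the uniform convergence of $\hat\Phi_n$ and continuity of $\Phi$ at $0$. The limit $\Psi(x)$ has the same behaviour, so uniform convergence near $0$ follows.
The case $k=0$, where your closed form has a $0/0$ term, is the one computed directly above.
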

\begin{proof}
As %
\[ \Psi_n(k) = G_n(k) + H_n(k)\Psi_n(k+1) \] where $G_n(k) =
\frac{1-p}{n} + \frac{p}{k+1}\Phi_n(k+1)$ and $H_n(k) =
\frac{k}{k+1}$.\\ %
To use 
Theorem \ref{thm:3}  we let 
$h_n(x) := n(1-H_n(\floor{nx}))$ and $g_n(x) := nG_n(\floor{nx})$, now
\[ g_n(x) = n\left(\frac{1-p}{n} +
\frac{p}{\floor{nx}+1}\Phi_n(\floor{nx}+1) \right) = 1-p +
\frac{pn}{\floor{nx}+1}\Phi_n(\floor{nx}+1) \] 
\[ h_n(x) =
n\left(1-\left(\frac{\floor{nx}}{\floor{nx}+1}\right)\right) =
\left(\frac{n}{\floor{nx}+1}\right) \] 
As $n\rightarrow \infty$, $g_n(x) \rightarrow g(x)$, $h_n(x)
\rightarrow h(x)$ and $\floor{nx} \approx nx$.
\begin{itemize}
\item Let us estimate the value of $g(x)$, %
as $n\rightarrow \infty$ :
\[ g(x) = 1-p + \underset{n \rightarrow \infty}{\lim} \left(
\frac{pn}{nx+1}\Phi_n(nx+1) \right) \]
Now let us consider $\underset{n \rightarrow \infty}{\lim}
\frac{pn}{nx+1}$.
\begin{equation*}
\begin{split}
\frac{pn}{nx+1} = \frac{pn}{nx} \left(\frac{1}{1+\frac{1}{nx}} \right)
\end{split}
\end{equation*}
Using expansion for $\frac{1}{1+\epsilon}$ where $\epsilon =
\frac{1}{nx}$. So,
\begin{equation} \label{eq:pn}
\begin{split}
\frac{pn}{nx+1} &= \frac{p}{x} \left(1-\frac{1}{nx} + O(n^{-2})
\right) \\ %
&= \frac{p}{x} + O(n^{-1})
\end{split}   
\end{equation}
From Proposition~\ref{prop:probArrival.1}, and as $n\rightarrow \infty$,
$\Phi_n(nx+1)=\Phi_n(nx)$
\begin{equation} \label{eq:phinx}
\underset{n \rightarrow \infty}{\lim} \Phi_n(nx+1)= 
\Phi(x)
\end{equation}
Substituting values for $(\frac{pn}{nx+1})$ and $\Phi_n(nx+1)$ from
equations \ref{eq:pn}, \ref{eq:phinx}, We get,
\begin{equation*}
g(x) = 1-p + \left(\frac{p}{x} + O(n^{-1}) \right) \Phi(x)
\end{equation*}
As $n \rightarrow \infty$ then $O(n^{-1})$ and higher order term
vanishes, So,
\begin{equation} \label{eq:gx2}
g(x) = 1-p + \frac{p}{x}\Phi(x)
\end{equation}

\item Let us estimate the value of $h(x)$: 
\[ h(x) = \underset{n \rightarrow \infty}{\lim} \frac{n}{nx+1} =
\underset{n \rightarrow \infty}{\lim} \frac{n}{nx} \left(
\frac{1}{1+\frac{1}{nx}} \right) \] 
Using expansion for $\frac{1}{1+\epsilon}$ where $\epsilon =
\frac{1}{nx}$. So,
\begin{equation*}
\begin{split}
h(x) &= \underset{n \rightarrow \infty}{\lim} \frac{1}{x} \left( 1-
\frac{1}{nx} + O(n^{-2}) \right) \\ &= \frac{1}{x} + O(n^{-1})
\end{split}
\end{equation*}
As $n \rightarrow \infty$ then $O(n^{-1})$ and higher order term
vanishes, So,
\begin{equation} \label{eq:hx2}
h(x) = \frac{1}{x}
\end{equation}
\end{itemize}
Using Theorem \ref{thm:3} and Substituting $g(x)$ and $h(x)$ from
above equations \ref{eq:gx2}, \ref{eq:hx2}, we get,
\begin{equation*}
\Psi^\prime(x) = \frac{1}{x} \Psi(x) - \left( 1-p +\frac{p}{x}\Phi(x)
\right)
\end{equation*}
\end{proof}

\subsubsection{Calculation of $\Upsilon_n(k)$}

We calculate the value of $\Upsilon_n(k)$ as follows:
\begin{prop}
\label{prop:probArrival.5} For all natural numbers $n$ and $k$ where
$1< k \leq n$, we have, 
\[ \Upsilon_n(k) = \frac{1}{k} +
\left(1-\frac{p}{(1+p)(n-k+1)+1}\right)\left(1-\frac{1}{k}\right)\Upsilon_n(k-1)
\]
and $\Upsilon_n(1) = 1$
\end{prop}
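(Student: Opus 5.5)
The plan is to condition on what happens around the first appearance of the $k^{th}$ distinct interviewee, in the same style as the proofs of Propositions \ref{prop:probArrival.1} and \ref{prop:probArrival.3}, but running the recursion \emph{forward} in $k$ rather than backward. The base case is immediate. When the first distinct interviewee appears, he/she is trivially leading and has appeared exactly once, so $\Upsilon_n(1)=1$.

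For $1<k\le n$, I split on whether the newly arriving $k^{th}$ distinct interviewee is the new leading one among the $k$ seen so far. By the random-order assumption, this happens with probability $\frac1k$. In that case the leading interviewee has appeared exactly once (this very appearance), which contributes the term $\frac1k$.

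With probability $1-\frac1k$ the new interviewee is non-leading, so the leading interviewee is the same one who was leading when the $(k-1)^{th}$ distinct interviewee arrived. The event ``leader has appeared only once at the arrival of the $k$-th distinct interviewee'' then requires two things:
\begin{itemize}
\item the leader had appeared only once at the arrival of the $(k-1)^{th}$ distinct interviewee, which has probability $\Upsilon_n(k-1)$;
\item the leader's second appearance did not occur in the gap between the $(k-1)^{th}$ and $k^{th}$ first appearances.
\end{itemize}
Here I use the modelling device of the Remark in Section~\ref{chap:relatedchap1}: all non-leading interviewees are treated as exhausted. The remaining pool after $k-1$ distinct interviewees then consists of $(1+p)(n-k+1)$ expected future appearances of unseen interviewees, plus the one pending second appearance of the leader. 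This gives a pool of size $(1+p)(n-k+1)+1$, exactly the quantity $a$ of Proposition \ref{prop:probArrival.1} with $k$ replaced by $k-1$. The leader's second appearance exists with probability $p$, and, given that it exists, it is the next interview with probability $\frac{1}{(1+p)(n-k+1)+1}$. So the probability that the leader's status is not disturbed before the next new interviewee is $1-\frac{p}{(1+p)(n-k+1)+1}$, mirroring how the factor $1-pa$ arose when $P(E)P(S_1)$ and $P(\bar E)P(S_2)$ were combined earlier.

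Multiplying these factors and adding the two cases by total probability yields the stated recursion. The main obstacle is the second factor. Justifying that the competing event is only ``the very next interview is the leader's return'', and that it has the uniform one-over-pool probability, relies on the same heuristic counting of remaining appearances (with $(1+p)$ expected appearances per unseen interviewee) that the paper uses in Case~1 of Proposition~\ref{prop:probArrival.1}. I would state explicitly that this approximation is being adopted, so that the derivation is consistent with the earlier propositions, rather than attempt an exact computation over the random number of reappearances.
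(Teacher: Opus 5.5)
Your argument is correct, at the same heuristic level as the paper, but it conditions in a different order.

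The paper conditions first on the state at the $(k-1)$th arrival ($E$: the leader has been seen once; $\bar E$: seen twice). It then conditions on whether a second appearance exists ($R$, probability $p$), and then on whether the return precedes the next new arrival ($T_1$, $T_2$). The term $\frac1k$ and the factor $\left(1-\frac1k\right)\left(1-\frac{p}{(1+p)(n-k+1)+1}\right)$ only emerge after collecting the $\frac1k$-contributions spread across $T_1$, $T_2$, $\bar R$ and $\bar E$ and simplifying.

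You instead condition first on whether the $k$th arrival is the new leader. A new leader resets the count regardless of history, and its relative rank is independent of the past. So you get the $\frac1k$ term directly and exactly, and the product form of the coefficient is visible without algebra. You also never need the complementary state $\bar E$, which sidesteps the paper's slip of writing $P(\bar E)=1-\Upsilon_n(k)$ instead of $1-\Upsilon_n(k-1)$.

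What the paper's enumeration buys is an explicit account of every scenario, such as the leader returning and then being displaced ($T_1$). What yours buys is clarity about where the modelling approximations enter: only in the factor $1-\frac{p}{(1+p)(n-k+1)+1}$. Both proofs obtain that factor from the same two simplifications. The first is using the unconditional probability $p$ for a pending return, even though the leader is known not to have reappeared yet. The second is the uniform one-over-pool return probability. You were right to state these explicitly as adopted approximations.
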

\begin{proof}
By hypothesis, when the first (distinct) interviewee makes his/her
initial appearance, that interviewee is necessarily the leading
interviewee. Therefore, the probability that the leading interviewee
appears only once when the first distinct interviewee is interviewed,
i.e., $\Upsilon_n(1)$, is equal to $1$. \\
Now, for the general case of $k$ distinct interviews, let's consider
two mutually exclusive events.
\begin{enumerate}
\item \textit{$E$: The leading interviewee has appeared only once
when the $(k-1)^{th}$ distinct interviewee makes his/her initial
appearance.}\\ 
Let $P(Q_1)$ be $\Upsilon_n(k)|E$ i.e. probability conditioned on
event $E$.\\ 
This case can be categorized further into two mutually exclusive
events:
\begin{description}
\item[Case-1:] \textit{$R$: When the future appearance of the leading
interviewee observed in $(k-1)$ distinct interviewee interviews is
possible}\\
As each interviewee may appear a second time with probability $p$.
Therefore,
\[P(R) = p\] 
Now, for the leading interviewee to be observed only once in $k$
distinct interviewee interviews, one of the following two events
should happen.
\begin{itemize}
\item \textit{$T_1$: The leading interviewee appears again before the
$k^{th}$ distinct interviewee is interviewed and the $k^{th}$ distinct
interviewee is a new leading interviewee.}\\ 
Since only one remaining appearance of the current leading interviewee
is among the unobserved interviewees, the probability that he/she
arrives again after $(k-1)^{th}$ distinct interviewee has been
interviewed is $\frac{1}{(1+p)(n-(k-1))+1}$. Furthermore, the
probability that the $k^{th}$ distinct interviewee is a new leading
interviewee is $\frac{1}{k}$. Hence, 
\[P(T_1) = \frac{1}{k}\frac{1}{(1+p)(n-(k-1))+1}\]

\item \textit{$T_2$: The leading interviewee does not appear again
before $k^{th}$ distinct interviewee is interviewed.}\\ 
The probability that the leading interviewee does not appear again
before the $k^{th}$ distinct interviewee is one minus the probability
of the second appearance of the leading interviewee. Therefore,
\[P(T_2) = 1-\frac{1}{(1+p)(n-(k-1))+1}\].
\end{itemize}
By using the rule of total probability, 
\[ P(Q_1|R) = P(T_1)+P(T_2) \]
\begin{equation*}
\begin{split}
P(Q_1|R) &=
\frac{1}{k}\frac{1}{(1+p)(n-(k-1))+1}+1-\frac{1}{(1+p)(n-(k-1))+1} \\
&= 1- \left(1-\frac{1}{k} \right)\left(\frac{1}{(1+p)(n-k+1)+1}\right)
\end{split}
\end{equation*}
    
\item[Case-2:] \textit{$\Bar{R}$: When the future appearance is not
possible for the leading interviewee observed in $(k-1)$ distinct
interviewee interviews.}\\ 
As each interviewee may appear a second time with probability $p$.
Therefore, 
\[P(\Bar{R}) =1-p\] 
Since the future appearance is not possible for the leading
interviewee, the next interviewee can only be a new distinct
interviewee, so in either case, when this new distinct interviewee is
a new leading or not, we will always have the case when the leading
interviewee is observed once in $k$ distinct interviews. Therefore, 
\[ P(Q_1|\Bar{R}) = 1 \]
\end{description}
By the rule of total probability,
\[ P(Q_1)=P(Q_1|R)P(R) + P(Q_1|\Bar{R})P(\Bar{R}) \]
\begin{equation*}
\begin{split}
P(Q_1)&=\left(1- \left(1-\frac{1}{k}
\right)\left(\frac{1}{(1+p)(n-k+1)+1}\right)\right)p + (1-p) \\ %
&= p-p\left(1-\frac{1}{k} \right)\left(\frac{1}{(1+p)(n-k+1)+1}\right)
+ 1-p \\ &= 1-p\left(1-\frac{1}{k}
\right)\left(\frac{1}{(1+p)(n-k+1)+1}\right)
\end{split}
\end{equation*}

\item \textit{$\Bar{E}$: The leading interviewee has appeared twice
when the $(k-1)^{th}$ distinct interviewee makes his/her initial
appearance.}\\ 
Let $P(Q_2)$ be $\Upsilon_n(k)|\Bar{E}$ i.e. probability conditioned
on $\Bar{E}$.\\ 
As the current leading interviewee has already appeared twice, for the
single appearance of the leading interview to be present in $k$
distinct interviewees, the $k^{th}$ distinct interviewee must be a new
leading interviewee. Therefore,
\[ P(Q_2) = \frac{1}{k} \]
\end{enumerate}
As the probability of event $E$ is $\Upsilon_n(k-1)$ and event
$\Bar{E}$ is $1-\Upsilon_n(k)$. Therefore, by combining the
probabilities from both events $E$ and $\Bar{E}$ using the rule of
total probability, we get, 
\[ \Upsilon_n(k) = P(E)P(Q_1) + P(\Bar{E})P(Q_2) \]
\begin{equation*}
\begin{split}
\Upsilon_n(k) &= \Upsilon_n(k-1)\left(1-p\left(1-\frac{1}{k}
\right)\left(\frac{1}{(1+p)(n-k+1)+1}\right)\right)
+(1-\Upsilon_n(k-1))\left(\frac{1}{k}\right) \\
&= \frac{1}{k}+
\left(1-\frac{1}{k}\right)\left(1-\frac{p}{(1+p)(n-k+1)+1}\right)\Upsilon_n(k-1)
\end{split}
\end{equation*}
\end{proof}

\begin{prop}
\label{prop:probArrival.9} The sequence of functions
$\hat{\Upsilon}_n(x) := \Upsilon_n(\floor{nx})$ converges uniformly on
$[0,1]$ for $\Upsilon(0)=1$ to
\begin{equation*}
\Upsilon^\prime(x) = -\left(\frac{1}{x} + \frac{p}{(1+p)(1-x)} \right)
\Upsilon(x) + \frac{1}{x}
\end{equation*}
\end{prop}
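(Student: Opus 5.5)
The plan is to cast the recursion of Proposition~\ref{prop:probArrival.5} as a forward recursion and apply Theorem~\ref{thm:2}. Proposition~\ref{prop:probArrival.5} gives
\[
\Upsilon_n(k) = G_n(k) + H_n(k)\Upsilon_n(k-1),
\]
where
\[
G_n(k)=\frac{1}{k},\qquad H_n(k)=\Bigl(1-\frac{1}{k}\Bigr)\Bigl(1-\frac{p}{(1+p)(n-k+1)+1}\Bigr).
\]
The initial value is $\Upsilon_n(1)=1$. Theorem~\ref{thm:2} is stated with initial condition at $k=0$. I will either set $\Upsilon_n(0):=1$ (the recursion at $k=1$ gives $1+0\cdot\Upsilon_n(0)=1$, so this is consistent) or note that shifting one index does not affect the limit. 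Either way the limit satisfies $\Upsilon(0)=\mu=1$. Theorem~\ref{thm:2} then yields $f'=-fh+g$, so I need to identify $g$ and $h$.

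Next I compute the limits, with the same first-order expansions used for Propositions~\ref{prop:probArrival.7} and \ref{prop:probArrival.8}. First,
\[
g_n(x)=nG_n(\floor{nx})=\frac{n}{\floor{nx}}\to \frac{1}{x}.
\]
For $h$, write $b=\frac{p}{(1+p)(n-\floor{nx}+1)+1}=\frac{p}{(1+p)n(1-x)}+O(n^{-2})$. Then
\[
1-H_n=1-\Bigl(1-\tfrac{1}{\floor{nx}}\Bigr)(1-b)=\frac{1}{\floor{nx}}+b-\frac{b}{\floor{nx}},
\]
and therefore
\[
h_n(x)=n(1-H_n)\to \frac{1}{x}+\frac{p}{(1+p)(1-x)}.
\]
Both convergences are uniform on any $[\epsilon,\epsilon']\subset(0,1)$, because $\floor{nx}\ge n\epsilon-1$ and $n-\floor{nx}\ge n(1-\epsilon')$ there, so the error terms are $O(1/(n\epsilon))$ and $O(1/(n(1-\epsilon')))$ uniformly. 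The limits are continuous on $(0,1)$. Substituting into $f'=-fh+g$ gives exactly the stated ODE.

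The main obstacle is the hypothesis of Theorem~\ref{thm:2} that $\hat\Upsilon_n$ itself converges uniformly on $[0,1]$ to a continuous function. The paper so far takes this hypothesis for granted. I would try to establish it directly, as follows. The values stay in $[0,1]$: since $0\le H_n\le 1-1/k$, an induction gives $\Upsilon_n(k)\le \frac1k+(1-\frac1k)=1$. Next I bound increments. We have
\[
\Upsilon_n(k)-\Upsilon_n(k-1)=\frac1k\bigl(1-\Upsilon_n(k-1)\bigr)-(1-\tfrac1k)\,b\,\Upsilon_n(k-1),
\]
which is $O(1/k)+O(1/(n-k+1))$. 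This gives equicontinuity of $\hat\Upsilon_n$ on compact subsets of $(0,1)$. By Arzel\`a--Ascoli, together with uniqueness of the solution of the limiting linear ODE with $\Upsilon(0)=1$, every subsequence has the same limit, hence the whole sequence converges.

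Near the endpoints more care is needed. Near $x=0$ the small-$k$ behaviour keeps $\Upsilon_n$ close to $1$. Near $x=1$ the factor $b$ becomes of order one only for the last $O(1)$ values of $k$, which is a vanishing fraction of $[0,1]$. Controlling these boundary layers uniformly is the delicate step. If it proves too cumbersome, I would follow the paper's convention and assume, as in Ribas, that uniform convergence holds, reducing the proof to the computation of $g$ and $h$ above.
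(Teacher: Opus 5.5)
Your proposal is correct and follows the paper's proof: you put the recursion of Proposition~\ref{prop:probArrival.5} into the form required by Theorem~\ref{thm:2}, with $G_n(k)=1/k$ and the same $H_n(k)$, and you obtain the same limits $g(x)=1/x$ and $h(x)=1/x+p/((1+p)(1-x))$. Your additional steps go beyond the paper: you set $\Upsilon_n(0):=1$ to match the $F_n(0)=\mu$ normalization, you check that $g_n$ and $h_n$ converge uniformly on compact subintervals, and you sketch why $\hat{\Upsilon}_n$ itself converges uniformly, a hypothesis of Theorem~\ref{thm:2} that the paper simply assumes.
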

\begin{proof}
From Proposition \ref{prop:probArrival.5} we know that,
\[
    \Upsilon_n(k) = G_n(k) + H_n(k)\Upsilon_n(k-1)
\]
where $G_n(k) = \frac{1}{k}$ and $H_n(k) =
\left(1-\frac{p}{(1+p)(n-k+1)+1}\right) 
\left(1-\frac{1}{k}\right)$.\\ 
We use 
Theorem \ref{thm:2} with 
$h_n(x) := n(1-H_n(\floor{nx}))$ and $g_n(x) := nG_n(\floor{nx})$. Now
\[ g_n(x) = n\frac{1}{\floor{nx}} \]
\begin{equation*}
\begin{split}
h_n(x) &= n\left(1- \left(1-\frac{p}{(1+p)(n-\floor{nx}+1)+1}\right) 
\left(1-\frac{1}{\floor{nx}}\right)\right) \\ %
&= n\left(\frac{1}{\floor{nx}} +
\left(\frac{p}{(1+p)(n-\floor{nx}+1)+1} \right)
\left(1-\frac{1}{\floor{nx}}\right)\right)
\end{split}
\end{equation*}
As $n\rightarrow \infty$, $g_n(x) \rightarrow g(x)$, $h_n(x)
\rightarrow h(x)$ and $\floor{nx} \approx nx$.
\begin{itemize}
\item Let us estimate the value of $g(x)$:
\begin{equation} \label{eq:gx3}
g(x) = \underset{n \rightarrow \infty}{\lim} \left( \frac{1}{x}
\right) = \frac{1}{x}
\end{equation}
\item Let us estimate the value of $h(x)$:
\begin{equation*}
\begin{split}
h(x) &= \underset{n \rightarrow \infty}{\lim} n\left(\frac{1}{nx} +
\left(\frac{p}{(1+p)(n-nx+1)+1} \right) 
\left(1-\frac{1}{nx}\right)\right) \\ 
&= \underset{n \rightarrow \infty}{\lim} \left(\frac{n}{nx} + n
\left(1-\frac{1}{nx}\right) \left(\frac{p}{(1+p)(n-nx+1)+1}
\right)\right) \\ 
&= \underset{n \rightarrow \infty}{\lim} \left(\frac{1}{x} + n
\left(1-\frac{1}{nx}\right) \left(\frac{p}{(1+p)(n-nx)+(p+1)+1}
\right)\right) \\ 
&= \frac{1}{x} +\underset{n \rightarrow \infty}{\lim} \left( n
\left(1-\frac{1}{nx}\right) \left(\frac{p}{(1+p)n(1-x)}\right) 
\left(\frac{1}{1+\frac{2+p}{(1+p)n(1-x)}} \right)\right) 
\end{split}
\end{equation*} 
Using expansion for $\frac{1}{1+\epsilon}$ where $\epsilon =
\frac{2+p}{(1+p)n(1-x)}$. So,
\begin{equation*}
\begin{split}
h(x) &= \frac{1}{x} +\underset{n \rightarrow \infty}{\lim}
\left(\left(1-\frac{1}{nx}\right) \frac{p}{(1+p)(1-x)} 
\left(1-\frac{2+p}{(1+p)n(1-x)} + O(n^{-2}) \right)\right)\\ %
&=\frac{1}{x} +\underset{n \rightarrow \infty}{\lim}
\left(\left(1-\frac{1}{nx}\right) \left(\frac{p}{(1+p)(1-x)} +
O(n^{-1}) \right)\right)\\ %
&=\frac{1}{x} +\underset{n \rightarrow \infty}{\lim}
\left(\frac{p}{(1+p)(1-x)} + O(n^{-1}) \right)
\end{split}
\end{equation*}
As $n \rightarrow \infty$ then $O(n^{-1})$ and higher order term
vanishes, so,
\begin{equation} \label{eq:hx3}
h(x) = \frac{1}{x} +\frac{p}{(1+p)(1-x)}
\end{equation}
\end{itemize}
Using Theorem \ref{thm:2} and substituting $g(x)$ and $h(x)$ from
above equations \ref{eq:gx3} and \ref{eq:hx3}, we get,
\begin{equation*}
\Upsilon^\prime(x) = -\left(\frac{1}{x} + \frac{p}{(1+p)(1-x)} \right)
\Upsilon(x) + \frac{1}{x}
\end{equation*}
\end{proof}

\subsubsection{Calculation of $F_n(k)$}

Overall success probability $F_n(k)$ can be obtained by combining the
probabilities of events $\chi_1$ and $\chi_2$ by using the total
probability rule.\\
From Proposition \ref{prop:probArrival.1} we know $\Phi_n(k)$, the
probability of event $\chi_1$.\\
From Proposition \ref{prop:probArrival.3}, we know $\Psi_n(k)$, the
probability of event $\chi_2$.\\
From Proposition \ref{prop:probArrival.5}, we know the value of
$\Upsilon_n(k)$.\\
Now $\Bar{\Upsilon}_n(k)$, the probability that the leading
interviewee has appeared twice when the $(k-1)^{th}$ distinct
interviewee makes his/her initial appearance is one minus
$\Upsilon_n(k)$. Therefore,
\[\Bar{\Upsilon}_n(k) = 1- \Upsilon_n(k)\] 
By combining the above probabilities using the law of conditional
probability, we can obtain,
\begin{equation} \label{eq:probArrivalpsuccess}
\begin{split}
F_n(k) &= \Upsilon_n(k)\Phi_n(k) + \Bar{\Upsilon}_n(k)\Psi_n(k) \\
    &= \Upsilon_n(k)\Phi_n(k) + (1-\Upsilon_n(k))\Psi_n(k)
\end{split}
\end{equation}
From Proposition \ref{prop:probArrival.1} we can compute $\phi_n(k)$
from $\phi_n(k+1)$,
and from Proposition \ref{prop:probArrival.3} we can compute $\psi_n(k)$ 
from $\psi_n(k+1)$ and finally 
from Proposition \ref{prop:probArrival.5} we can compute
$\Upsilon_n(k)$ from $\Upsilon_n(k-1)$ in $O(1)$ computation time.
Thus, values for all $1\leq k \leq n$ can be computed 
in $O(n)$ time.
From above results 
using Algorithm \ref{alg:getmat}, \ref{alg:getPrTh}, we can obtain the
success probability($F(k_n)$) in $O(n)$ computation time.

\begin{prop}
\label{prop:probArrival.10} 
The sequence of functions $\hat{F}_n(x) := F_n(\floor{nx})$ converges
uniformly on $[0,1]$ to
\begin{equation*}
F(x) = \Upsilon(x)\Phi(x) + (1-\Upsilon(x))\Psi(x)
\end{equation*}
\end{prop}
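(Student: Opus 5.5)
The plan is to obtain the statement directly from equation~\eqref{eq:probArrivalpsuccess},
\[
\hat F_n(x)=\hat\Upsilon_n(x)\hat\Phi_n(x)+\bigl(1-\hat\Upsilon_n(x)\bigr)\hat\Psi_n(x),
\]
by combining the uniform convergence statements of Propositions~\ref{prop:probArrival.7}, \ref{prop:probArrival.8} and \ref{prop:probArrival.9}. I will take as given from those propositions that $\hat\Phi_n\to\Phi$, $\hat\Psi_n\to\Psi$ and $\hat\Upsilon_n\to\Upsilon$ uniformly on $[0,1]$. The limits are continuous functions, namely the solutions of the stated linear ODEs with boundary values $\Phi(1)=p$, $\Psi(1)=0$ and $\Upsilon(0)=1$.

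The key steps, in order:
\begin{enumerate}
\item Record uniform boundedness. Each of $\Phi_n(k),\Psi_n(k),\Upsilon_n(k)$ is a probability, so it lies in $[0,1]$ for all $n,k$. The uniform limits therefore also take values in $[0,1]$.
\item Split the error for the product term:
\[
|\hat\Upsilon_n\hat\Phi_n-\Upsilon\Phi|\le|\hat\Upsilon_n|\,|\hat\Phi_n-\Phi|+|\Phi|\,|\hat\Upsilon_n-\Upsilon|\le\|\hat\Phi_n-\Phi\|_\infty+\|\hat\Upsilon_n-\Upsilon\|_\infty .
\]
\item Treat $(1-\hat\Upsilon_n)\hat\Psi_n$ the same way, using $|1-\hat\Upsilon_n|\le1$ and $|\Psi|\le1$. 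This bounds its error by $\|\hat\Psi_n-\Psi\|_\infty+\|\hat\Upsilon_n-\Upsilon\|_\infty$.
\item Add the two bounds:
\[
\sup_{x\in[0,1]}|\hat F_n(x)-F(x)|\le\|\hat\Phi_n-\Phi\|_\infty+\|\hat\Psi_n-\Psi\|_\infty+2\|\hat\Upsilon_n-\Upsilon\|_\infty\to0 .
\]
\item Note that $F$ is continuous as a combination of continuous functions. This is what later allows Theorem~\ref{thm:1} to locate the optimal threshold.
\end{enumerate}

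The main obstacle is not the algebra but the hypotheses being invoked. Theorems~\ref{thm:2} and~\ref{thm:3} \emph{assume} uniform convergence of $f_n$; they do not prove it. So, strictly, the earlier propositions only identify the limiting ODEs, and uniform convergence of $\hat\Phi_n,\hat\Psi_n,\hat\Upsilon_n$ needs separate justification.

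If that justification is required, I would first try to prove it directly, as follows.
\begin{itemize}
\item Show equicontinuity. From the recursions, $|\Phi_n(k+1)-\Phi_n(k)|=O(1/n)$ uniformly in $k$ away from $k=0$. This holds because $|1-H_n(k)|$ and $|G_n(k)|$ are $O(1/n)$ there, together with the bound $|\Phi_n|\le1$. The same argument applies to $\Psi_n$ and $\Upsilon_n$.
\item Apply Arzel\`a--Ascoli to the step functions (approximately, up to jumps of size $O(1/n)$) to get uniformly convergent subsequences.
\item Use uniqueness of the solution of each linear ODE with its boundary condition to conclude that the whole sequence converges.
\end{itemize}
The endpoints need care. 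Near $x=0$ the coefficient $1/x$ blows up, and near $x=1$ the coefficient $1/(1-x)$ blows up. I would handle them by using the explicit boundary values, $\Upsilon_n(1)=1$ and $\Phi_n(n)=p$, $\Psi_n(n)=0$, together with the fact that all quantities are bounded in $[0,1]$.
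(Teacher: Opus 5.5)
Your steps 1--4 follow the paper's own route. The paper also writes $\hat F_n=\hat\Upsilon_n\hat\Phi_n+(1-\hat\Upsilon_n)\hat\Psi_n$ and passes to the limit factor by factor, but it cites Theorem~\ref{thm:1}, which is about maximisers; your sup-norm estimate is the justification actually needed. You are also right that Theorems~\ref{thm:2} and~\ref{thm:3} presuppose uniform convergence rather than prove it.

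The gap is the premise you feed into step~2: that $\hat\Phi_n\to\Phi$ uniformly on $[0,1]$ with $\Phi(1)=p$. This is false for $0<p<1$, and your fallback fails exactly at the endpoint where you planned to use $\Phi_n(n)=p$. When $n-k=O(1)$, the quantity $a=1/((1+p)(n-k)+1)$ is of order one. So $G_n(k)\approx pa$ and $1-H_n(k)\approx pa$ are not $O(1/n)$ there, and your equicontinuity claim breaks. Concretely, Proposition~\ref{prop:probArrival.1} with $k=n-1$ and $a=1/(2+p)$ gives
\[
\Phi_n(n-1)=\frac{3pn+2-5p+2p^2}{n(2+p)}\longrightarrow\frac{3p}{2+p}\neq p=\Phi_n(n).
\]
Hence $\hat\Phi_n$ jumps by about $p(1-p)/(2+p)$ between $[1-\frac1n,1)$ and $x=1$ for every large $n$. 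That is incompatible with uniform convergence to any continuous function. The ODE side agrees: the homogeneous solutions of the $\Phi$-ODE grow like $(1-x)^{-p/(1+p)}$, and its unique solution bounded near $x=1$ tends to $1$, not to $p$.

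The statement itself survives, because the bad factor is multiplied by $\Upsilon$, which vanishes at $x=1$. With $q=p/(1+p)$, the $\Upsilon$-ODE with $\Upsilon(0)=1$ has solution
\[
\Upsilon(x)=\frac{(1-x)^{q}-(1-x)}{(1-q)\,x},
\]
so $\Upsilon(1)=0$ for $p>0$. Fix $\delta>0$ and split the interval:
\begin{itemize}
\item On $[0,1-\delta]$, run your estimate using convergence of $\hat\Phi_n$ only there.
\item On $[1-\delta,1]$, use $|\hat\Upsilon_n\hat\Phi_n-\Upsilon\Phi|\le\hat\Upsilon_n+\Upsilon\le 2\sup_{[1-\delta,1]}\Upsilon+\|\hat\Upsilon_n-\Upsilon\|_\infty$, which is small for small $\delta$ and large $n$.
\end{itemize}
The $\Psi$-term can stay as in your step~3: near $x=1$ each step of the $\Psi$-recursion changes $\Psi_n$ by only $O(1/n)$, so there is no jump. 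With $\Phi$ read as the bounded solution on $(0,1)$, this gives the claimed uniform limit. Moreover $F(1)=\Psi(1)=0$ whatever value is assigned to $\Phi(1)$.
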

\begin{proof}
From equation \ref{eq:probArrivalpsuccess}, we know that,
\begin{gather*}
F_n(k) = \Upsilon_n(k)\Phi_n(k) + (1-\Upsilon_n(k))\Psi_n(k)
\end{gather*}
Substituting $k = \floor{nx}$,
\begin{equation*}
F_n(k) = F_n(\floor{nx}) = \Upsilon_n(\floor{nx})\Phi_n(\floor{nx}) +
(1-\Upsilon_n(\floor{nx}))\Psi_n(\floor{nx})
\end{equation*}
using the notations introduced above,
\begin{equation*}
\hat{F}_n(x) = \hat{\Upsilon}_n(x)\hat{\Phi}_n(x) +
(1-\hat{\Upsilon}_n(x))\hat{\Psi}_n(x)
\end{equation*}
For uniform convergence at $n \rightarrow \infty$,
\begin{equation*}
\underset{n \rightarrow \infty}{\lim} \hat{F}_n(x) = \underset{n
\rightarrow \infty}{\lim}\hat{\Upsilon}_n(x)\hat{\Phi}_n(x) +
\underset{n \rightarrow
\infty}{\lim}(1-\hat{\Upsilon}_n(x))\hat{\Psi}_n(x)
\end{equation*}
As per Theorem \ref{thm:1}, at $n \rightarrow \infty, \ \ F(x) =
\hat{F}_n(x), \Upsilon(x) =\hat{\Upsilon}_n(x), \Phi(x) =
\hat{\Phi}_n(x),$\\ $\Psi(x) = \hat{\Psi}_n(x)$, so,
\begin{equation*}
F(x) = \Upsilon(x)\Phi(x) + (1-\Upsilon(x))\Psi(x)
\end{equation*}
\end{proof}

\begin{algorithm}[h!]
\DontPrintSemicolon
  \KwInput{$n, p$}
  \KwOutput{$\Phi, \Psi, \Upsilon$}
  \SetKwFunction{FgetPW}{get\_PW\_arrays}
  \SetKwProg{Fn}{Def}{:}{}
  \Fn{\FgetPW{}}{
        $\Phi[n] \longleftarrow p$\\
        $\Psi[n] \longleftarrow 0$\\
         \For{$k \gets n-1$ down \KwTo $0$} 
          {
	  $a = 1/((1+p)*(n-k)+1)$\\
            $\Phi[k] = (p*a*k+(1-p)*(1-p*a))/n + (p+k)*(1-p*a)*\Phi[k+1]/(k+1)$\\
            $\Psi[k] = (1-p)/n + (p*\Phi[k+1] + k*\Psi[k+1])/(k+1)$\\
            
          }
        $\Upsilon[1] \longleftarrow 1$\\
        \For{$k=2$ \KwTo $n$} 
          {
          $\Upsilon[k] = 1/k + (1-p/((1+p)(n-k+1)+1))*(1-1/k)*  \Upsilon[k-1]$\\
            
          }
        \KwRet $\Phi, \Psi, \Upsilon$
     }
\caption{Get $\Phi, \Psi, \Upsilon$ arrays}
\label{alg:getmat}
\end{algorithm}

\begin{algorithm}[h!]
\DontPrintSemicolon
  
  \KwInput{$n, \Phi, \Psi, \Upsilon$}
  \KwOutput{$F[k_n],k_n$}
  \SetKwFunction{FgetThp}{get\_optimal\_probability}

  \SetKwProg{Fn}{Def}{:}{}
  \Fn{\FgetThp{}}{
    \FgetPW{} \tcp*{function from Algorithm-\ref{alg:getmat}}
    $F[1,\dots,n] \longleftarrow 0$\\
    $F(k_n) \longleftarrow 0$\\
    $k_n \longleftarrow 0$\\
   \For{$k=1$ \KwTo $n$}
   {
        $F[k] = \Phi[k]*\Upsilon[k] + (1-\Upsilon[k])*\Psi[k]$\\
        \If{$F[k] > F(k_n)$}{
            $F(k_n) \longleftarrow F[k]$\\
            $k_n \longleftarrow k$
        }
    }
    \KwRet $F(k_n),k_n$
  }
\caption{Get Optimal Probability ($F(k_n)$)}
\label{alg:getPrTh}
\end{algorithm}

\subsection{Special cases for probabilistic second arrival}

Here we will be showing that the model we have derived in the earlier
section reduces to the standard models if 
$p=0$  or 
$p=1$.

\subsubsection{Case $p=0$ (Classical Secretary Problem)}
When $p=0$, no interviewee arrives a second time, and each interviewee
is observed exactly once. 
On substituting $p=0$ in the equation from Proposition
\ref{prop:probArrival.1}, we get,
\begin{gather*}
  \Phi_n(k) = \frac{1}{n} + \frac{k}{k+1} \Phi_n(k+1)
\end{gather*}
Since each interviewee only arrives once, $\Upsilon_n(k)$ will always
be $1$. Therefore,
\begin{equation*}
\begin{split}
F_n(k) &= \Upsilon_n(k)\Phi_n(k) + \Bar{\Upsilon}_n(k)\Psi_n(k) \\ %
&= \Phi_n(k) + (1-1)\Psi_n(k) \\ &= \Phi_n(k)
\end{split}
\end{equation*}
Therefore,
\begin{gather*}
F_n(k) = \frac{1}{n} + \frac{k}{k+1} F_n(k+1) \ \ for \ \ 1 \leq k
\leq n
\end{gather*}
This is the same recursive function obtained by Ribas\cite{Ribas2018}
for the Classical Secretary Problem, which gives Optimal Probability
as $1/e$ at optimal threshold $n/e$, see Ribas\cite{Ribas2018} for
details.

\subsubsection{Case $p=1$ (Guaranteed Reappearance)}

When $p=1$, every interviewee arrives exactly twice. 

\begin{itemize}
\item \textit{Value for $\Phi_n(k)$ at $p=1$:}\\
On substituting $p=1$ in the equation from Proposition
\ref{prop:probArrival.1}, we get,
\begin{gather*}
\Phi_n(k) = \frac{ak}{n} + \frac{(1+k)(1-a)}{k+1} \Phi_n(k+1)
\end{gather*}
where $a = \frac{1}{2n-2k+1}$. Therefore,
\begin{equation*}
\begin{split}
\Phi_n(k) &= \frac{k}{n} \frac{1}{2n-2k+1} + \left(1-\frac{1}{2n-2k+1}
\right)\Phi_n(k+1) \\ %
&= \frac{k}{n} \frac{1}{2n-2k+1} + \frac{2n-2k}{2n-2k+1} \Phi_n(k+1)
\end{split}
\end{equation*}

\item 
On substituting $p=1$ in the equation from Proposition
\ref{prop:probArrival.3}, we get,
\begin{gather*}
\Psi_n(k) = \frac{1}{k+1} \Phi_n(k+1) + \frac{k}{k+1} \Psi_n(k+1)
\end{gather*}
    
\item 
On substituting $p=1$ in the equation from Proposition
\ref{prop:probArrival.5}, we get,
\begin{gather*}
\Upsilon_n(k) = \frac{1}{k} +
\left(1-\frac{1}{2n-2k+3}\right)\left(1-\frac{1}{k}\right) 
\Upsilon_n(k-1)
\end{gather*}
Ribas\cite{Ribas2018} has obtained $\Bar{\Upsilon}_n(k)$, the
probability that the leading interviewee has appeared twice when the
$(k-1)^{th}$ distinct interviewee makes his/her initial appearance.
As 
$\Bar{\Upsilon}_n(k) = 1-\Upsilon_n(k)$, 

\begin{equation*}
\begin{split}
1-\Bar{\Upsilon}_n(k) &= \frac{1}{k} +
\left(1-\frac{1}{2n-2k+3}\right) \left(1-\frac{1}{k}\right) 
(1-\Bar{\Upsilon}_n(k-1))\\ %
\Bar{\Upsilon}_n(k) &=1-\frac{1}{k} -
\left(1-\frac{1}{2n-2k+3}\right) \left(1-\frac{1}{k}\right)(1-\Bar{\Upsilon}_n(k-1))\\ %
&= \left(\frac{k-1}{k(2n-2k+3)}\right) +
\left(\frac{2(k-1)(n-k+1)}{k(2n-2k+3)}\right)\Bar{\Upsilon}_n(k-1)\\ 
\end{split}
\end{equation*}   
\end{itemize}

These equations (for case $p=1$) are the same recursive equation
obtained by Ribas\cite{Ribas2018}. So we can say that the optimal
probability of success is the same as the one obtained by
Ribas\cite{Ribas2018}. Therefore, the optimal probability is $0.76$ at
an optimal threshold of $0.47$.

\subsection{Numerical Simulation}

Table \ref{tab:getProptth} shows the optimal threshold $k_n$ together
with the corresponding optimal success probability $P(k_n)$ obtained
using Algorithm \ref{alg:getPrTh} for various values of $p$ when
$n=100$. The table also reports the normalized threshold $k_n/n$. The
results demonstrate how allowing reappearances (increasing $p$)
affects both the optimal stopping rule and the probability of
successfully selecting the best candidate.
\begin{table}[h!]
\centering
\renewcommand{\arraystretch}{1.4} 
\setlength{\tabcolsep}{10pt}      
\begin{tabular}{ |c||c|c|c|c|c|c|c|c|c| } 
\hline
$\mathbf{p}$ & 0 & 0.001 & 0.1 & 0.25 & 0.5 & 0.75 & 0.9 & 0.999 & 1 \\ 
\hline
$\mathbf{k_n}$ & 37 & 37 & 47 & 55 & 57 & 54 & 51 & 48 & 48 \\ 
\hline
$\mathbf{\frac{k_n}{n}}$ & 0.37 & 0.37 & 0.47 & 0.55 & 0.57 & 0.54 & 0.51 & 0.48 & 0.48 \\
\hline
$\mathbf{P(k_n)}$ & 0.371 & 0.372 & 0.484 & 0.597 & 0.6874 & 0.7328 & 0.7546 & 0.7695 & 0.7697 \\
\hline
\end{tabular}
\caption{Optimal values $k_n$, $k_n/n$, and $P(k_n)$ for different
values of $p$ when $n=100$} \label{tab:getProptth}
\end{table}

This figure illustrates the success probability $P(k)$ as a function
of the stopping threshold $k$ for $n=100$ under different values of
the reappearance probability $p$. Each curve corresponds to a
different $p$, and the dashed vertical lines indicate the respective
optimal thresholds. The results show that as $p$ increases, both the
optimal threshold and the maximum success probability increase,
demonstrating the positive effect of candidate reappearance on
selection success.

\begin{figure}[h!]
    \centering
    \includegraphics[width = 15cm]{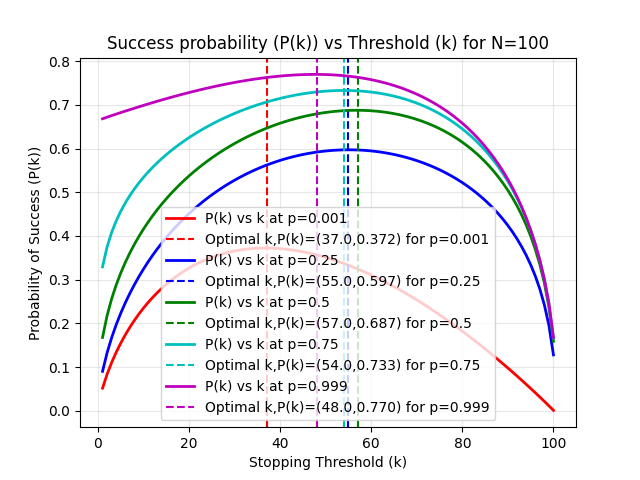}
    \caption{Success probability graph w.r.t. $k$ at $n=100$}
    \label{fig:probArrivalpsuccess}
\end{figure}

\section{Secretary problem with top-3 success}
\label{chap:relatedchap3}

In this case, each interview arrives exactly once.  %
But, we will be satisfied if the chosen candidate is amongst the best
(top) three.

We 
use the optimal strategy of the classical secretary problem, i.e.
interview $k$ interviewees, reject all of them but note down the
leading one among them.

After interviewing $k$ interviewees, select the first interviewee that
is better than all the previously seen interviewees.

\subsection{Success probability for top-3 secretary problem}

Let,\\ $P_n(k)$: Probability that the chosen interviewee is one of
top-3 assuming we have rejected the first $k$ interviewees.

\subsubsection{Calculation of $P_n(k)$}

For $P_n(k)$, we have the below proposition,
\begin{prop}
\label{prop:top3.1}
For all natural numbers $n$ and $k$ where $k<n$, we have,
\[P_n(k) = \frac{1}{k+1} \left(1-
\frac{\binom{n-3}{k+1}}{\binom{n}{k+1}}\right) + \frac{k}{k+1} 
P_n(k+1) \] and $P_n(n)=0$.
\end{prop}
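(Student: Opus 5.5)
Following the pattern of Propositions~\ref{prop:probArrival.1} and~\ref{prop:probArrival.3}, I will condition on the $(k+1)^{th}$ interviewee, i.e.\ the first one seen after the $k$ rejected ones. For the boundary value: if all $n$ interviewees have been rejected, nothing is chosen, so $P_n(n)=0$.

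For $k<n$ there are two mutually exclusive cases.
\begin{itemize}
\item[(a)] The $(k+1)^{th}$ interviewee is leading among the first $k+1$. Since arrivals are uniformly random, this happens with probability $\frac{1}{k+1}$. The strategy then accepts this interviewee.
\item[(b)] The $(k+1)^{th}$ interviewee is non-leading, with probability $\frac{k}{k+1}$. The strategy rejects this interviewee and continues exactly as if $k+1$ interviewees had been rejected. So the conditional success probability is $P_n(k+1)$.
\end{itemize}
Justifying (b) needs one observation. Whether the $(k+1)^{th}$ interviewee is leading depends only on the relative order of the first $k+1$ interviewees. The future behaviour of the rule depends only on which of the first $k+1$ is best and on the later arrivals. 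By exchangeability of a uniformly random permutation, conditioning on (b) does not change the law of the rest of the process, so $P_n(k+1)$ is indeed the correct conditional value. This is the same reasoning as Case-3 in Proposition~\ref{prop:probArrival.1}.

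The main step is case (a): I must compute the probability that the chosen interviewee, the best of the first $k+1$, lies in the top three overall. The best of a set is outside the top three exactly when none of ranks $1,2,3$ lies in that set. The first $k+1$ positions form a uniformly random $(k+1)$-subset of the $n$ interviewees, independent of which position within them holds the maximum. The probability that this subset avoids ranks $1,2,3$ is $\binom{n-3}{k+1}/\binom{n}{k+1}$. Hence the conditional success probability in case (a) is $1-\binom{n-3}{k+1}/\binom{n}{k+1}$.

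I expect the main obstacle to be justifying this independence. The event ``the $(k+1)^{th}$ interviewee is the maximum of the first $k+1$'' depends only on the internal order, while the set of ranks occupying those positions is independent of that internal order. Therefore conditioning on (a) leaves the set uniform. I would state this explicitly using the product structure of a uniform permutation (set of ranks $\times$ arrangement).

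Combining the two cases by total probability gives
\[P_n(k)=\frac{1}{k+1}\left(1-\frac{\binom{n-3}{k+1}}{\binom{n}{k+1}}\right)+\frac{k}{k+1}P_n(k+1),\]
with the usual convention $\binom{m}{j}=0$ when $j>m$, which covers $k+1>n-3$.
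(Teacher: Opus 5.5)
Your proof is correct and follows the same route as the paper: you condition on whether the $(k+1)^{th}$ interviewee is leading (probability $\frac{1}{k+1}$), use the complementary count $\binom{n-3}{k+1}/\binom{n}{k+1}$ when that interviewee is accepted, and fall back to $P_n(k+1)$ otherwise. Your explicit independence argument supplies a step the paper uses without comment: the set of ranks in the first $k+1$ positions and the later arrivals are independent of the internal order of those positions.
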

\begin{proof}
By hypothesis, when an employer has already interviewed all $n$
interviewees, then the leading interviewee is the overall best, and
he/she has already appeared and was 
rejected, so the probability of selecting the best 
is $0$, or 
$P_n(n)=0$.\\
Now for the general case of $k$ interviews, let's look at time instant
$(k+1)$:\\
At $(k+1)^{th}$ time instant, either 
of the following events occurs:
\begin{itemize}
\item \textit{$E$: The $(k+1)^{th}$ interviewee is a new leading
interviewee.}\\ 
According to the proposed strategy, 
this interviewee will be accepted
immediately. Therefore, the success probability with respect to event
$E$ is the probability of choosing one of the top $3$. Therefore, 
\[ P(\mbox{success|}E) = P(\mbox{choosen in top }3|E) \] 
The complement of this event is the event that none of the first $k+1$
interviewees is among top three. The probability of this is:
\[P(\mbox{not in top } 3)=\frac{\binom{n-3}{k+1}}{\binom{n}{k+1}} \]
Here $\binom{n}{r}$ is the number of ways to choose $r$
items from $n$ items. 
\begin{equation} \label{eq:min3}
P(\mbox{in top } 3 | E) = 1- \frac{\binom{n-3}{k+1}}{\binom{n}{k+1}} 
\end{equation}
Now, the probability that the leading interviewee appears at $k+1$ is
\begin{equation} \label{eq:pe}
P(E)=\frac{1}{k+1}
\end{equation}
\item \textit{$\Bar{E}$: The $(k+1)^{th}$ interviewee is a
non-leading interviewee.}\\ 
According to the proposed strategy, this interviewee will not be
chosen (we will always reject him/her and will continue the process).
Therefore,
\begin{equation} \label{eq:patebar}
P_n(k)|\Bar{E} = P_n(k+1)
\end{equation}
And the probability that a non-leading interviewee appears at
$(k+1)^{th}$ interview is one minus the probability that a leading
interviewee does not appear at $(k+1)^{th}$ interview. Therefore,
\begin{equation} \label{eq:pebar}
P(\Bar{E})=1-P(E)=1-\frac{1}{k+1}=\frac{k}{k+1}
\end{equation}
\end{itemize}
By using the rule of total probability,
\[
P_n(k) = P(E)(P_n(k)|E) + P(\Bar{E})(P_n(k)|\Bar{E}) \] %
from equations \ref{eq:pe}, \ref{eq:min3}, \ref{eq:pebar} and
\ref{eq:patebar}, we get,  %
\[ P_n(k) = \frac{1}{k+1} \left(1-
\frac{\binom{n-3}{k+1}}{\binom{n}{k+1}}\right) + \frac{k}{k+1} 
P_n(k+1) \]
\end{proof}
The above recursion for the probability of success can be solved in
$O(n)$ computation time. 

\begin{algorithm}[h!]
\DontPrintSemicolon
  
  \KwInput{$n$}
  \KwOutput{$P(k_n),k_n$}
  \SetKwFunction{FgetThp}{get\_optimal\_probability}

  \SetKwProg{Fn}{Def}{:}{}
  \Fn{\FgetThp{}}{
    $P[n] \longleftarrow 0$\\
    $P(k_n) \longleftarrow 0$\\
    $k_n \longleftarrow 0$\\
         \For{$k \gets n-1$ down \KwTo $0$} 
    {
        $r = ((n-k-1)*(n-k-2)*(n-k-3))/(n*(n-1)*(n-2))$\\
        $P[k] = (1/(k+1))*(1-r) + (k/(k+1))*P_n[k+1]$\\
        \If{$P[k] > P(k_n)$}{
            $P(k_n) \longleftarrow P[k]$\\
            $k_n \longleftarrow k$
        }
    }
    \KwRet $P(k_n),k_n$
  }
\caption{Get Optimal Probability ($P(k_n)$)}
\label{alg:gettop3th}
\end{algorithm}

\subsection{Asymptotic analysis of $P_n$}

For asymptotic analysis of $P_n$, we have the following,
\begin{prop}
\label{prop:top3.2} The sequence of functions $\hat{P}_n(x) :=
P_n(\floor{nx})$ converges uniformly on $[0,1]$ with 
$P(1)=0$ to
\begin{gather*}
P(x) =  -3x{\ln}(x)+3x^2-\frac{x^3}{2}-\frac{5x}{2}
\end{gather*}
\end{prop}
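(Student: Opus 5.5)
Following the pattern of the earlier asymptotic propositions, I will write the recursion of Proposition \ref{prop:top3.1} in the form required by Theorem \ref{thm:3}:
\[
P_n(k) = G_n(k) + H_n(k)P_n(k+1), \qquad P_n(n)=0,
\]
with $G_n(k)=\frac{1}{k+1}\bigl(1-r_n(k)\bigr)$, $H_n(k)=\frac{k}{k+1}$, and $r_n(k)=\binom{n-3}{k+1}/\binom{n}{k+1}$. First I simplify $r_n(k)$ in closed form:
\[
r_n(k)=\frac{(n-k-1)(n-k-2)(n-k-3)}{n(n-1)(n-2)},
\]
which matches Algorithm \ref{alg:gettop3th}. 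Setting $k=\floor{nx}$ gives $r_n(\floor{nx})\to(1-x)^3$, uniformly on $[0,1]$. The error is $O(n^{-1})$ because both numerator and denominator are cubic polynomials with bounded coefficients.

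Next I compute the limits for Theorem \ref{thm:3}. For the homogeneous part,
\[
h_n(x)=n\bigl(1-H_n(\floor{nx})\bigr)=\frac{n}{\floor{nx}+1}\to \frac1x,
\]
exactly as in Proposition \ref{prop:probArrival.8}. For the forcing term,
\[
g_n(x)=nG_n(\floor{nx})=\frac{n}{\floor{nx}+1}\bigl(1-r_n(\floor{nx})\bigr)\to \frac{1-(1-x)^3}{x}=3-3x+x^2 .
\]
Both convergences are uniform on every $[\epsilon,\epsilon']\subset(0,1)$, since $n/(\floor{nx}+1)$ is uniformly close to $1/x$ there. Theorem \ref{thm:3} then gives the linear ODE
\[
P'(x)=\frac{P(x)}{x}-(3-3x+x^2),\qquad P(1)=0 .
\]

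To solve it I use the integrating factor $1/x$. This gives $(P/x)'=-(3/x-3+x)$, hence $P(x)/x=-3\ln x+3x-\tfrac{x^2}{2}+C$. The condition $P(1)=0$ forces $C=-\tfrac52$, so
\[
P(x)=-3x\ln x+3x^2-\tfrac{x^3}{2}-\tfrac{5x}{2},
\]
which is the claimed formula. As a sanity check, $P(x)\to0$ as $x\to0^+$, consistent with the finite recursion: $P_n(0)$ is the probability of choosing the first applicant, about $3/n$.

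The main obstacle is the hypothesis of Theorem \ref{thm:3} that $\hat P_n$ converges uniformly to a continuous function; the earlier propositions take this for granted. To address it, I will unroll the recursion explicitly:
\[
P_n(k)=\frac{k}{n}\sum_{j=k}^{n-1}\frac{1-r_n(j)}{j}\cdot\frac{n}{j+1}\cdot\frac{j+1}{n}\cdots .
\]
More precisely, the product $\prod_{i=k}^{j-1}\frac{i}{i+1}=\frac{k}{j}$ gives
\[
P_n(k)=\sum_{j=k}^{n-1}\frac{k}{j(j+1)}\bigl(1-r_n(j)\bigr).
\]
This is a Riemann sum for $x\int_x^1\frac{1-(1-t)^3}{t^2}\,dt$. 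The integrand $\frac{1-(1-t)^3}{t^2}\cdot x$ is bounded by $3x/t\le 3$ for $t\ge x$, so the Riemann-sum error is uniformly $O(\log n/n)$, and uniform convergence on $[0,1]$ follows. This explicit sum also gives the closed form directly, independently of the ODE, and serves as a cross-check.
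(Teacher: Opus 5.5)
Your proposal follows the paper's proof essentially step for step: the same split $G_n(k)=\frac{1}{k+1}(1-r_n(k))$, $H_n(k)=\frac{k}{k+1}$, the same limits $h(x)=1/x$ and $g(x)=(1-(1-x)^3)/x$ fed into Theorem~\ref{thm:3}, and the same integrating factor $1/x$ with the constant fixed by $P(1)=0$. Your extra step---unrolling to $P_n(k)=\sum_{j=k}^{n-1}\frac{k}{j(j+1)}(1-r_n(j))$ and reading it as a Riemann sum for $x\int_x^1 \frac{1-(1-t)^3}{t^2}\,dt$---is a real improvement, because it proves the uniform convergence of $\hat{P}_n$ that Theorem~\ref{thm:3} requires and the paper simply assumes (it even gives the formula without the ODE); drop the garbled first display, and get the error rate from the monotonicity of $3/t-3+t$ (bounded variation of the integrand on $[x,1]$), since boundedness alone gives no rate.
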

\begin{proof}
From Proposition \ref{prop:top3.1} we know that,
\[P_n(k) = G_n(k) + H_n(k)P_n(k+1)\]
where $G_n(k) = \frac{1}{k+1} \left(1-
\frac{\binom{n-3}{k+1}}{\binom{n}{k+1}}\right)$ and $H_n(k) =
\frac{k}{k+1}$.\\
Using expansion of $\binom{n}{r}$, 
we have
\[\frac{\binom{n-3}{k+1}}{\binom{n}{k+1}}= \frac{(n-k-1)(n-k-2)(n-k-3)}{n(n-1)(n-2)}\]
From Theorem \ref{thm:3} with 
$h_n(x) := n(1-H_n(\floor{nx}))$ and
$g_n(x) := nG_n(\floor{nx})$ we have, 
\[ g_n(x) = n\left(\frac{1}{\floor{nx}+1} \left(1-
\frac{(n-\floor{nx}-1)(n-\floor{nx}-2)(n-\floor{nx}-3)}{n(n-1)(n-2)}\right)
\right) \] and
\[ h_n(x) =
n\left(1-\left(\frac{\floor{nx}}{\floor{nx}+1}\right)\right) =
\left(\frac{n}{\floor{nx}+1}\right) \] 
As $n\rightarrow \infty$, $g_n(x) \rightarrow g(x)$, $h_n(x)
\rightarrow h(x)$ and $\floor{nx} \approx nx$.
\begin{itemize}
\item Let us estimate the value of $g(x)$: 
\[ g(x) = \underset{n \rightarrow \infty}{\lim} \left(\frac{n}{nx+1}
\left(1- \frac{(n-nx-1)(n-nx-2)(n-nx-3)}{n(n-1)(n-2)}\right) \right)
\] 
Now let us consider $\underset{n \rightarrow \infty}{\lim}
\frac{n}{nx+1}$.
\begin{equation*}
\begin{split}
\frac{n}{nx+1} = \frac{n}{nx\left(1+\frac{1}{nx}\right)}
\end{split}
\end{equation*}

Using expansion for $\frac{1}{1+\epsilon}$ where $\epsilon =
\frac{1}{nx}$. So,
\begin{equation*}
\frac{n}{nx+1} =  \frac{n}{nx}  \left(1-\frac{1}{nx} + O(n^{-2}) \right)
\end{equation*}

\begin{equation}
\label{eq:nbynx}
\frac{n}{nx+1} =  \frac{1}{x} + O(n^{-1})
\end{equation}

Now for $\frac{(n-nx-1)(n-nx-2)(n-nx-3)}{n(n-1)(n-2)}$,
\begin{equation} \label{nck}
\begin{split}
\frac{(n-nx-1)(n-nx-2)(n-nx-3)}{n(n-1)(n-2)} &=
\frac{n^3(1-x-\frac{1}{n})(1-x-\frac{2}{n})(1-x-
\frac{3}{n})}{n^3(1-\frac{1}{n})(1-\frac{2}{n})} \\ %
&= \frac{1-x-\frac{1}{n}}{1-\frac{1}{n}} 
\frac{1-x-\frac{2}{n}}{1-\frac{2}{n}} \left(1-x-\frac{3}{n}\right)
\end{split}
\end{equation}

Using expansion for $\frac{1}{1+\epsilon}$ where $\epsilon =
\frac{1}{n}$. So,
\begin{equation*}
\frac{1-x-\frac{1}{n}}{1-\frac{1}{n}} = \left( 1-x-\frac{1}{n} \right)
\left( 1+\frac{1}{n} + O(n^{-2}) \right) = \left( 1-x + O(n^{-1})
\right)
\end{equation*}

similarly,
\begin{equation*}
\frac{1-x-\frac{2}{n}}{1-\frac{2}{n}} = \left( 1-x-\frac{2}{n} \right)
\left( 1+\frac{2}{n} + O(n^{-2}) \right) = \left( 1-x + O(n^{-1})
\right)
\end{equation*}

Substituting the above results in equation \ref{nck}, we get,
\begin{equation} \label{eq:nck2}
\begin{split}
\frac{(n-nx-1)(n-nx-2)(n-nx-3)}{n(n-1)(n-2)} &= \left( 1-x + O(n^{-1})
\right)\left( 1-x + O(n^{-1}) \right)\left( 1-x -\frac{3}{n} \right)
\\ %
&= (1-x)^3 + O(n^{-1})
\end{split}
\end{equation}

Finally combining above results from equations \ref{eq:nck2},
\ref{eq:nbynx} and $g(x)$, we get,
\begin{equation*}
\begin{split}
g(x) &= \underset{n \rightarrow \infty}{\lim} \left(\frac{1}{x} +
O(n^{-1}) \right) \left(1- \left((1-x)^3 + O(n^{-1}) \right)\right) \\
&= \underset{n \rightarrow \infty}{\lim} \frac{1}{x} \left(1- (1-x)^3
\right) + O(n^{-1})
\end{split}
\end{equation*}

As $n \rightarrow \infty$ then $O(n^{-1})$ and higher order term
vanishes, so,
\begin{equation} \label{eq:gxtop3}
g(x) = \frac{1}{x} \left(1- (1-x)^3 \right)
\end{equation}

\item Let us estimate the value of $h(x)$:
\[ h(x) = \underset{n \rightarrow
\infty}{\lim}\left(\frac{n}{nx+1}\right) \] 
from equation \ref{eq:nbynx}, we know that,
\begin{equation*}
\begin{split}
h(x) = \underset{n \rightarrow \infty}{\lim} \left( \frac{1}{x} +
O(n^{-1}) \right)
\end{split}
\end{equation*}

As $n \rightarrow \infty$ then $O(n^{-1})$ and higher order term
vanishes, so,
\begin{equation} \label{eq:hxtop3}
h(x) = \frac{1}{x}
\end{equation}
\end{itemize}
Using Theorem \ref{thm:3} and substituting $g(x)$ and $h(x)$ from
above equations \ref{eq:gxtop3}, \ref{eq:hxtop3}. We get,
\begin{equation} \label{eq:pdeshx}
P^\prime(x) = \frac{1}{x} P(x) - \frac{1}{x} \left(1- (1-x)^3 \right)
\end{equation}
Using the general solution of a first-order linear differential
equation. We get,
\begin{gather*}
P(x) = \mu(x)^{-1} \left[ \mu(1)  P(1) + \int_{x}^{1} \mu(t)  g(t)  dt  \right],\\
\mu(x) = \exp\left(-\int h(x) dx \right)
\end{gather*}
As $P(1)=0$, so
\begin{gather*}
P(x) = \mu(x)^{-1} \left[ \int_{x}^{1} \mu(t) g(t) dt \right],\\
\mu(x) = \exp\left(-\int h(x) dx \right), h(x)=\frac{1}{x},
g(x)=\frac{1}{x} \left(1- (1-x)^3 \right)
\end{gather*}
Lets calculate the value for $\mu(x)$:
\begin{gather*}
\mu(x) = \exp\left(-\int h(x) dx \right) = e^{\left(-\int \frac{1}{x}
dx \right)} = e^{-{\ln}(x)} = x^{-1}
\end{gather*}
Substituting value of $\mu(x)$ is above equation for $P(x)$,
\begin{gather*}
P(x) = x \left[ \int_{x}^{1} \frac{1}{t} \left(\frac{1}{t} \left(1-
(1-t)^3 \right) \right) dt \right]
\end{gather*}
As $1- (1-t)^3 = 1-(1-t^3-3t+3t^2) = t^3+3t-3t^2$,
\begin{equation*}
\begin{split}
P(x) &= x \left[ \int_{x}^{1} \frac{1}{t^2} \left(t^3+3t-3t^2 \right)
dt \right] \\ %
&= x \left[ \int_{x}^{1} \left(t-3+\frac{3}{t} \right) dt \right] \\
&= x \left[ \frac{t^2}{2}-3t+3{\ln}(t) \right]_x^1 \\ %
&= x\left[ \left( \frac{1^2}{2}-3(1)+3{\ln}(1) \right)- \left(
\frac{x^2}{2}-3x+3{\ln}(x) \right) \right] \\ %
&= x\left[ \frac{-5}{2}-\left( \frac{x^2}{2}-3x+3{\ln}(x) \right)
\right] \\ %
&= x\left[-3{\ln}(x)+3x-\frac{x^2}{2}-\frac{5}{2} \right] \\%
	&=
-3x{\ln}(x)+3x^2-\frac{x^3}{2}-\frac{5x}{2}
\end{split}
\end{equation*}
\end{proof}

\subsubsection{Optimal Threshold}

The optimal threshold is the time instant at which the probability of
success, $P(x)$, is maximized. We can calculate this by taking the
derivative of $P(x)$ and equating it to zero. By substituting the
value of $P(x)$ from Propositions \ref{prop:top3.2} in equation
\ref{eq:pdeshx}, We get,
\begin{equation*}
\begin{split}
P^\prime(x) &= \frac{1}{x} 
\left(-3x{\ln}(x)+3x^2-\frac{x^3}{2}-\frac{5x}{2} \right) -
\frac{1}{x} \left(1- (1-x)^3 \right) \\ %
&= \left(-3{\ln}(x)+3x-\frac{x^2}{2}-\frac{5}{2} \right) - \frac{1}{x}
\left(x^3+3x-3x^2 \right)\\ %
&= \left(-3{\ln}(x)+3x-\frac{x^2}{2}-\frac{5}{2} \right) -
\left(x^2+3-3x \right) \\  %
&= -3{\ln}(x)+3x-\frac{x^2}{2}-\frac{5}{2} - x^2-3+3x \\ %
&= -3{\ln}(x)+6x-\frac{3}{2}x^2-\frac{11}{2} \\
\end{split}
\end{equation*}
Equating it to $0$, we get,
\begin{equation} \label{eq:optix}
-3{\ln}(x)+6x-\frac{3}{2}x^2-\frac{11}{2} =0
\end{equation}
Root of the above equation \ref{eq:optix} is $x^* \approx 0.259$.\\ %
At $x^* \approx 0.259$,
\begin{gather*}
P(x^*) = 
-3(0.259){\ln}(0.259)+3(0.259)^2-\frac{(0.259)^2}{2}-\frac{5(0.259)}{2}
= 0.59
\end{gather*}
So the optimal stopping threshold is $x^* \approx 0.259$ with maximum
success probability as $P(x^*) \approx 0.59$. 

\subsection{Numerical Simulation}

Table \ref{tab:top3optth} shows the optimal threshold $k_n$, the
normalized threshold $k_n/n$, and the corresponding optimal success
probability $P(k_n)$ obtained using Algorithm \ref{alg:gettop3th} for
various values of $n$. The results illustrate the convergence of the
optimal stopping proportion and success probability as $n$ increases.
\begin{table}[h!]
    \centering
    \renewcommand{\arraystretch}{1.4} 
    \setlength{\tabcolsep}{10pt}      
    \begin{tabular}{ |c|c|c|c|c|c|c|c| } 
         \hline
         $\mathbf{n}$ & 10 & 100 & 1000 & 10000 & 100000 & 1000000 & 10000000 \\ 
         \hline
         $\mathbf{k_n}$ & 2 & 26 & 260 & 2599 & 25997 & 259971 & 2599716 \\ 
         \hline
         $\mathbf{\frac{k_n}{n}}$ & 0.2 & 0.26 & 0.26 & 0.2599 & 0.25997 & 0.259971 & 0.2599716 \\
         \hline
         $\mathbf{P(k_n)}$ & 0.6640 & 0.6008 & 0.5953 & 0.59479 & 0.59473 & 0.59473 & 0.59472 \\
         \hline
    \end{tabular}
    \caption{Optimal values $k_n$, $k_n/n$ and $P(k_n)$ for different values of $n$}
    \label{tab:top3optth}
\end{table}

Figure \ref{fig:top3psuccess} shows the theoretical success
probability $P(k)$ as a function of the stopping threshold $k$ for
$n=100$ in the top-3 secretary problem. The probability increases
initially, reaches its maximum value $P(k) \approx 0.6008$ at the
optimal threshold $k=26$, and then gradually decreases, illustrating
the existence of an optimal stopping rule.

\begin{figure}[h!]
    \centering
    \includegraphics[width = 15cm]{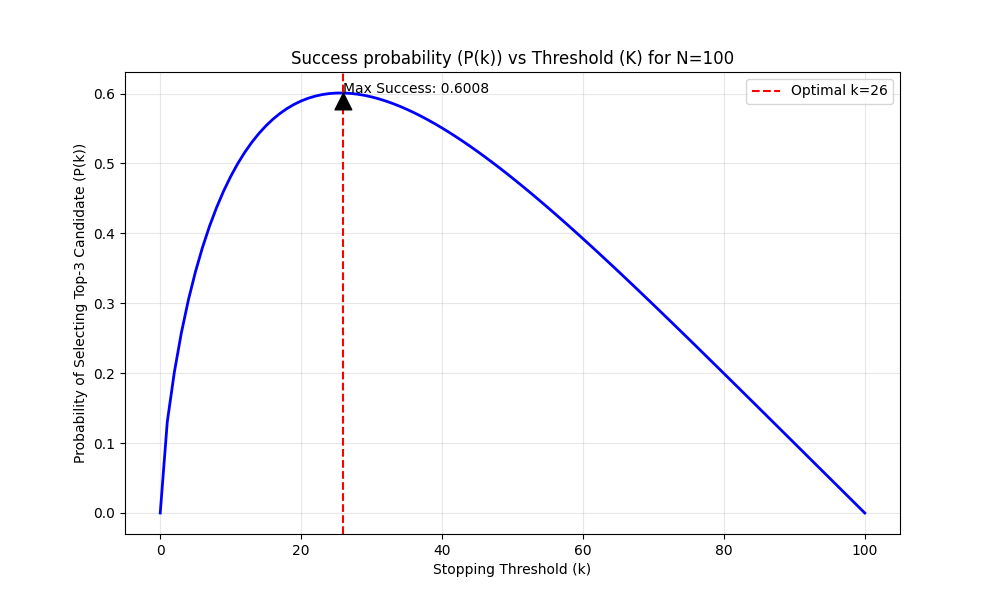}
    \caption{Success probability graph w.r.t $k$ at $n=100$}
    \label{fig:top3psuccess}
\end{figure}

Figure \ref{fig:top3simulation} presents the simulated success
probability based on $10,000$ trials for $n=100$ in the top-3
secretary problem. The simulation closely matches the theoretical
behavior, with the maximum success probability $(\approx0.6124)$
occurring near the optimal threshold $k=26$, validating the analytical
results.

\begin{figure}[h!]
    \centering
    \includegraphics[width = 15cm]{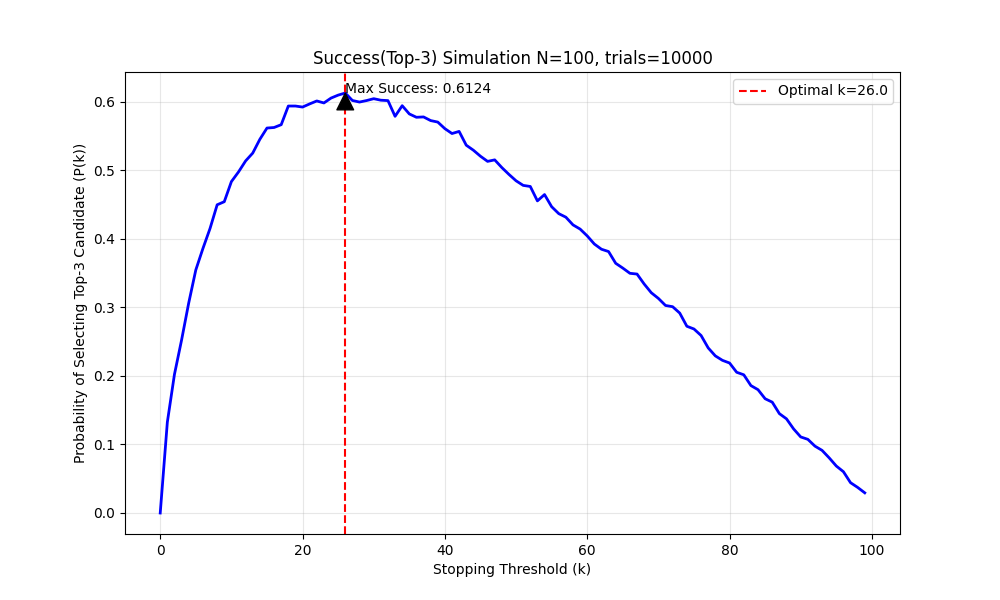}
    \caption{Simulation Success top-3 graph w.r.t $k$ at $n=100$, $trials=10000$}
    \label{fig:top3simulation}
\end{figure}

\section{Conclusion} \label{chap:conclusion} 

Here, we have studied two variations of the secretary problem.
Firstly, we studied a variation where each candidate may reappear a
second time with probability $p$. We proposed a threshold-based
strategy and derived a recursive formula to calculate the probability
of success at a general threshold $k$. We also provided an $O(n)$ time
dynamic programming algorithm to calculate the probability of success
at a general threshold $k$. We analyzed the extreme cases $p=0$ and
$p=1$, showing that the model reduces respectively to the classical
and returning secretary problem. The asymptotic analysis of the
proposed strategy is formulated in terms of differential equations,
which can be further studied to obtain optimal bounds.

Finally, 
we analyzed a relaxed variation of the classical secretary problem,
where selecting a candidate ranked among the top three globally is
considered a success. Using the optimal threshold structure of the
classical secretary problem, we derived a recursive formula and an
$O(n)$ time dynamic programming algorithm to calculate the probability
of success at a general threshold $k$. We further studied the
asymptotic behavior of the model and the optimal threshold and
corresponding optimal probability of success. 

\clearpage
\bibliographystyle{unsrt}  
\bibliography{references} 

\begin{thebibliography}{10}

\bibitem{Hill2009}
Theodore Hill.
\newblock Knowing when to stop.
\newblock {\em American Scientist}, 97(2):126, 2009.

\bibitem{SecProbWiki}
Wikipedia.
\newblock {Secretary problem}.
\newblock
  \url{http://en.wikipedia.org/w/index.php?title=Secretary\%20problem&oldid=987132965},
  2021.

\bibitem{vardi:LIPIcs:2015:4953}
Shai Vardi.
\newblock {The Returning Secretary}.
\newblock In Ernst~W. Mayr and Nicolas Ollinger, editors, {\em 32nd
  International Symposium on Theoretical Aspects of Computer Science (STACS
  2015)}, volume~30 of {\em Leibniz International Proceedings in Informatics
  (LIPIcs)}, pages 716--729, 2015.

\bibitem{bruss2000}
F.~Thomas Bruss.
\newblock Sum the odds to one and stop.
\newblock {\em Ann. Probab.}, 28(3):1384--1391, 06 2000.

\bibitem{dynkin}
E.~B. DYNKIN.
\newblock The optimum choice of the instant for stopping a markov process.
\newblock {\em Soviet Mathematics}, 4:627--629, 1963.

\bibitem{Babaioff2008}
Moshe Babaioff, Nicole Immorlica, David Kempe, and Robert Kleinberg.
\newblock Online auctions and generalized secretary problems.
\newblock {\em {ACM} {SIGecom} Exchanges}, 7(2):1--11, June 2008.

\bibitem{Babaioff2007}
Moshe Babaioff, Nicole Immorlica, David Kempe, and Robert Kleinberg.
\newblock A knapsack secretary problem with applications.
\newblock In {\em Approximation, Randomization, and Combinatorial Optimization.
  Algorithms and Techniques}, pages 16--28. Springer Berlin Heidelberg, 2007.

\bibitem{babaioff2007matroids}
Moshe Babaioff, Nicole Immorlica, and Robert Kleinberg.
\newblock Matroids, secretary problems, and online mechanisms.
\newblock In {\em Symposium on Discrete Algorithms (SODA'07)}, pages 434--443,
  January 2007.

\bibitem{Garrod2012jan}
Bryn Garrod, Grzegorz Kubicki, and Micha{\l} Morayne.
\newblock How to choose the best twins.
\newblock {\em {SIAM} Journal on Discrete Mathematics}, 26(1):384--398, January
  2012.

\bibitem{Vardi2014}
Shai Vardi.
\newblock The secretary returns.
\newblock {\em CoRR}, abs/1404.0614, 2014.

\bibitem{Ribas2018}
J.~M.~Grau Ribas.
\newblock A new look at the returning secretary problem.
\newblock {\em Journal of Combinatorial Optimization}, 37(4):1216--1236,
  September 2018.

\bibitem{Bayn2017}
L.~Bay{\'{o}}n, P.~Fortuny Ayuso, J.~M. Grau, A.~M. Oller-Marc{\'{e}}n, and
  M.~M. Ruiz.
\newblock The best-or-worst and the postdoc problems.
\newblock {\em Journal of Combinatorial Optimization}, 35(3):703--723, November
  2017.

\end{thebibliography}

\end{document}